\newtheorem{thm}{Theorem}
\newtheorem{lem}[thm]{Lemma}
\DeclareMathOperator*{\argmin}{arg\,min}
\newtheorem{defn}[thm]{Definition}
\def\D{{\mathcal D}}
\def\R{{\mathbb R}}
\def\X{{\mathcal X}}
\def\E{{\mathbb E}}
\def\Pr{{\mathbb P}}
\def\diam{\text{diam}}
\def\asp{\text{asp}}
\def\c{\mathcal L}
\def\seq{\text{OC}}
\def\C{\mathcal C}
\def\v{v}
\def\o{opt}
\newlength\algowd
\def\mes{OC}
\def\C{\mathcal C}
\newcommand{\poly}{\mathrm{poly}}
\newcommand{\polylog}{\mathrm{polylog}}
\title[No-substitution $k$-means Clustering with Adversarial Order]{No-substitution $k$-means Clustering with Adversarial Order}
\begin{document}

\maketitle

\begin{abstract}%
 We investigate $k$-means clustering in the online no-substitution setting when the input arrives in \emph{arbitrary} order. In this setting, points arrive one after another, and the algorithm is required to instantly decide whether to take the current point as a center before observing the next point. Decisions are irrevocable. The goal is to minimize both the number of centers and the $k$-means cost. Previous works in this setting assume that the input's order is random, or that the input's aspect ratio is bounded. It is known that if the order is arbitrary and there is no assumption on the input, then any algorithm must take all points as centers. Moreover, assuming a bounded aspect ratio is too restrictive --- it does not include natural input generated from mixture models.

 We introduce a new complexity measure that quantifies the difficulty of clustering a dataset arriving in arbitrary order. We design a new random algorithm and prove that if applied on data with complexity $d$, the algorithm takes $O(d\log(n) k\log(k))$ centers and is an $O(k^3)$-approximation. We also prove that if the data is sampled from a ``natural" distribution, such as a mixture of $k$ Gaussians, then the new complexity measure is equal to $O(k^2\log(n))$. This implies that for data generated from those distributions, our new algorithm takes only $\poly(k\log(n))$ centers and is a $\poly(k)$-approximation.  In terms of negative results, we prove that the number of centers needed to achieve an $\alpha$-approximation is at least $\Omega\left(\frac{d}{k\log(n\alpha)}\right)$.
\end{abstract}

\begin{keywords}%
  k-means clustering, online no-substitution setting, adversarial order, complexity measure, the online center measure, mixture models 
\end{keywords}

\section{Introduction}
Clustering is a fundamental task in unsupervised learning with many diverse applications such as health \cite{zheng2014breast}, fraud-detection \cite{sabau2012survey}, and recommendation systems \cite{logan2004music} among others. The goal of $k$-means clustering is to find $k$ centers that minimize the $k$-means cost of a given set of points. The cost is the sum of squared distances between a point and its closest center. This problem is NP-hard \cite{aloise2009np,dasgupta2008hardness}, and, consequently, approximated algorithms are used \cite{arthur2006k,aggarwal2009adaptive, kanungo2004local}.  An algorithm is an $\alpha$-approximation if the $k$-means cost of its output is at most $\alpha$ times the optimal one.

In this paper, we focus on the online no-substitution setting where the points arrive one after another, and decisions are made instantly, before observing the next point. In this online  setting, the algorithm decides whether to take a point as center immediately upon its arrival. Decisions cannot be changed; once a point is considered a center, it remains center forever. Conversely, points that are not centers cannot become centers after the next point is received. This set-up is summarized in Algorithm~\ref{alg:intro_online_setting}.
\begin{algorithm}[ht]
\caption{Online no-substitution setting}
\label{alg:intro_online_setting}
\SetAlgoLined
$C \leftarrow \{\}$ \tcp*{\parbox[t]{3.3in}{set of centers}}
\For{$t \in \{1, 2, 3, \dots, n\}$}{
decide whether to add $x_t$ to $C$\tcp*{\parbox[t]{3.3in}{\raggedright only $x_t$ can be added to $C$ at time $t$, and no center can be removed from $C$}}
}
return $C$ \;
\end{algorithm}
In this setting, the goal is two-fold: (i) minimizing the cost of the returned centers, $C$, and (ii) minimizing number of centers, $|C|$.
 
The importance of the online no-substitution setting is motivated by several examples in \cite{hess2020sequential}. Here is one of them: suppose we are running a clinical trial for testing a new drug. Patients come one after another to the clinic, and for each of them we must decide whether or not to administer the drug. Our choices are immutable: once a patient takes the drug, it cannot be untaken, and once a patient leaves the clinic, we cannot decide to test the drug on them. Our overall goal is to administer the drug to a small representative sample of the entire population. In this example, the patients are the points, and the people given the drug are the centers. The number of patients that took the drug is the number of centers, which should be small as it is an experimental drug and thus risky. Assuming an appropriate distance measure between any two people, a low cost $k$-means clustering provides a good representation of the entire population. In this setting, previous works  \cite{hess2020sequential,moshkovitz2019unexpected} present algorithms under the assumption that the order is random. However, this is not always the case. In the motivating example, the elderly patients might tend to arrive earlier than the younger ones. To account for this, in this paper we focus on the case that the order is arbitrary and might be adversarial. 
 
It is known that if $n$ input points arrive in arbitrary order, then $\Omega(n)$ centers are needed, as we demonstrate next. To prove this lower bound for any $\alpha$-approximation algorithm, consider an exponential series of points in $\R$: $(2\alpha)^1,(2\alpha)^2,\ldots.$ This sequence is constructed so that each point is very far away from the points preceding it. At each time step $t$, suppose that the algorithm doesn't select the current point, $(2\alpha)^t$. Because the number of points, $n$, is not known in advance, the algorithm must assume at all times that the series might stop. If this happens, the algorithm cost is roughly $(2\alpha)^t$, because of the cost incurred by the point $(2\alpha)^t$. On the other hand, the cost of the optimal clustering is roughly $(2\alpha)^{t-1}$, which violates the fact that the algorithm is an $\alpha$-approximation. Therefore, the algorithm must select every point in the sequence. Observe that this lower-bound dataset is well-designed and pathological. This leads to the following questions: are there more ``natural" datasets that require $\Omega(n)$ centers, and can we find a property that is shared by all hard-to-cluster datasets? 

In this paper, we define a new measure for datasets that accurately captures the hardness of learning in the no-substitution setting with arbitrary order. We show that the lower-bound pathological example from above is the only reason for necessarily taking a large number of points as centers. We also design a new algorithm that takes a small number of centers whenever the new measure is small.  Additionally, we prove that for data generated from many ``natural" distributions (like Gaussian mixture models), the new measure is small ($O(k^2\log n)$). The reason, intuitively, is that when sampling $n$ points for many distributions, the largest subset with distances exponentially increasing is of length $O(\log n)$. This allows taking only $\polylog(n)$ centers. This number of centers is so small that it is close to the number of centers needed if the order is random \cite{moshkovitz2019unexpected}.

\subsection{Problem setting}
Fix a dataset $X$ with $|X|=n$ points. 
We define a clustering of $X$ both by its clusters and its centers $\C = \{(C^1, c^1), (C^2, c^2), \dots, (C^k, c^k)\},$ where $X=C_1 \sqcup \ldots\sqcup C_k$ is a partition of $X$, and $c^i$ is the center of the cluster $C^i$.
A clustering's $k$-means cost with distance metric\footnote{$d$ should be for all $x,y$  (i) non negative  $d(x,y)\geq 0$ (ii) $d(x,x)=0$  and (iii) symmetric $d(x,y) = d(y,x)$} $d$ is equal to  $$\c(\C)=\sum_{i=1}^\ell\sum_{x\in C^i} d(x,c^i)^2.$$
 The optimal $k$-clustering, $opt_k$, with cost $cost(opt_k)$, is the one the minimizes the cost $$opt_k = \argmin_{\substack{\C=((C^1,c^1),\ldots,(C^k,c^k)) \text{ s.t. }\\  X = C_1 \sqcup \ldots\sqcup C_k}}\c(C).$$
Sometimes we define a clustering only by its centers or only by its clusters, and assume that the missing clusters or centers are the best ones. The best center for a cluster is its mean, and the best clusters for a given set of centers $c^1,\ldots,c^k$ are the closest points to each center $C^i=\{x:i=\argmin_{j\in[k]} d(x,c^j) \}$.


\begin{center}
\textit{The focus of the paper is the no-substitution setting where points arrive in an \textbf{\emph{arbitrary order}}.} 
\end{center}

An online algorithm should return a clustering that is comparable to $opt_k.$ But, for that to happen, the number of centers must be larger than $k$ \cite{moshkovitz2019unexpected}. Therefore, online algorithms return $\ell\geq k$ centers. We formally define ``comparable to $opt_k$" in the following way. An $\alpha$-approximation algorithm relative to $opt_k$ is one that returns a clustering $\C=((C^1,c^1),\ldots,(C^\ell,c^\ell))$ such that with probability\footnote{The probability is over the randomness of the algorithm; $0.9$ can be replaced by any constant close to $1$, this constant was used in \cite{moshkovitz2019unexpected}.} at least $0.9$, $\c(\C)\leq \alpha\cdot \c(opt_k)$. 
The goal is to find an $\alpha$-approximation algorithm that returns $\ell$ centers and aims to minimize \emph{two} values: both $\alpha$ and $\ell$. There is a trade-off between the two. Here are two extreme examples. If $\ell$ is large and equals $n$, then $\alpha$ is small and even equal to zero.  If $\alpha$ can go to infinity, then $\ell$ can be small and equal to $1$. In this paper, we focus on the case that $k\ll n$ or even $k$ a constant. Thus, $\alpha$ should be independent of $n$ and might be equal to $\poly(k).$

\subsection{Our results}
In this paper, we design an algorithm that can learn in the no-substitution setting, even when points arrive in an \emph{arbitrary} order. We showed that it is not possible for all datasets. Therefore, we introduce a new complexity measure that quantifies the hardness of learning a dataset in this setting. We show that the number of centers taken by the algorithm depends on the new complexity measure, and prove it is small for many datasets. We now formally define the new complexity measure. 

An essential notion is a \emph{diameter} of a set of points.  For a set of points $X$ its diameter, $\diam(X)$, is the distance of the two farthest points, $\diam(X)=\max_{x,x'\in X}d(x,x')$. 
The diameter of a clustering $\C$ is the maximal diameter among all of its clusters, $\diam(C)=\max_{C^i\in\C} \diam(C^i).$ The best-$\ell$-diameter (or $\ell$-diameter for short) of a set of points $X$, $\diam_\ell(X)$, is the diameter of the clustering with $\ell$ clusters and smallest diameter  $$\diam_l(X) = \min_{\substack{\C=((C^1,c^1),\ldots,(C_\ell,c^\ell)) \text{ s.t. }\\  X = C_1 \sqcup \ldots\sqcup C_\ell }}\diam(\C).$$ 
Earlier, we presented an exponential series that required taking $\Omega(n)$ centers. This lower-bound dataset required many centers because there is an order of the dataset where each new point seems to start a new cluster at its arrival. 
Intuitively, a new point $x_t$ starts a new cluster if its distance to the closest point in $X_{t-1}=\{x_1,\ldots, x_{t-1}\}$ is farther than the diameter of the best clustering of $X_{t-1}$ into $k-1$ clustering.
More formally, the new complexity measure, Online Center measure, $\seq_k(X)$, is defined as follows. 
\begin{defn}
The Online Center measure of a set $X$, denoted $\seq_k(X)$, is the length of the largest sequence of points in $X$, $x_1,\ldots,x_{m}\in X$,
 such that for every $1 < j \leq m$, $$ \min_{1 \leq i \leq {j-1}} d(x_i, x_j) > 2 \emph{\diam}_{k-1}(\{x_1, x_2, \dots, x_{j-1}\}).$$ 
\end{defn}
The new measure, $\seq_k(X)$, of a dataset $X$ with $|X|=n$  and $k$ clusters is an integer in $[n]$.
The constant $2$ can be replaced by any scalar $\alpha>1$, 
see Lemma \ref{lem_diameter}. 
As a sanity check, note that the lower-bound dataset has the highest $\seq$ possible, $n$, and indeed, the maximal number of centers is required for this dataset. 

In the paper, we show that if the data is sampled from a ``natural" mixture distribution then $\seq_k(X)=O(k^2\log n)$. Many distributions satisfy this: 
\begin{thm}[informal]
With probability of about $1-\frac{1}{n}$, sample $X$ of $n$ points from $k$ mixture of:  Gaussian, uniform, or exponential distributions has $\seq_k(X)=O(k^2\log n).$
\end{thm}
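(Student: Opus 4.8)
The plan is to separate the statement into a deterministic inequality that bounds $\seq_k(X)$ by (a logarithm of) the aspect ratio $\diam(X)/\delta(X)$, where $\delta(X):=\min_{x\neq y\in X}d(x,y)$, and then a routine probabilistic argument showing that a size-$n$ sample from any of the listed mixtures has $\diam(X)\le n^{O(1)}$ and $\delta(X)\ge n^{-O(1)}$ with probability at least $1-1/n$. Chaining the two gives $\seq_k(X)=O(k\log n)$, which in particular is $O(k^2\log n)$.

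\emph{Step 1 (deterministic bound).} I would prove
\[
\seq_k(X)\ \le\ k\Bigl(2+\log_2\tfrac{\diam(X)}{\delta(X)}\Bigr).
\]
The engine is the monotonicity of $Y\mapsto\diam_{k-1}(Y)$: if $Y'\subseteq Y$ then $\diam_{k-1}(Y')\le\diam_{k-1}(Y)$, because restricting an optimal $(k-1)$-clustering of $Y$ to $Y'$ is a valid $(k-1)$-clustering of $Y'$ of no larger diameter. Let $x_1,\dots,x_m$ realize $\seq_k(X)=m$ and cut the indices into consecutive blocks of size $k$, writing $P_t=\{x_1,\dots,x_{tk}\}$. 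The $k$ points of the $t$-th block occupy the $k-1$ clusters of an optimal $(k-1)$-clustering of $P_t$, so two of them, $x_a,x_b$ with $a<b$, lie in one cluster, hence $\diam_{k-1}(P_t)\ge d(x_a,x_b)$. The defining inequality of the sequence gives $d(x_a,x_b)>2\,\diam_{k-1}(\{x_1,\dots,x_{b-1}\})$, and since $b-1\ge (t-1)k$, monotonicity gives $\diam_{k-1}(\{x_1,\dots,x_{b-1}\})\ge\diam_{k-1}(P_{t-1})$; thus $\diam_{k-1}(P_t)>2\,\diam_{k-1}(P_{t-1})$. Iterating over the $\lfloor m/k\rfloor$ blocks, $\diam(X)\ge\diam_{k-1}(P_{\lfloor m/k\rfloor})>2^{\lfloor m/k\rfloor-1}\diam_{k-1}(\{x_1,\dots,x_k\})\ge 2^{\lfloor m/k\rfloor-1}\delta(X)$, where the last inequality uses that $\diam_{k-1}$ of $k$ distinct points equals the minimum pairwise distance among them (merge the closest pair). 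Solving for $m$ gives the claim, and the threshold $2$ can be replaced by any $\alpha>1$ at the cost of constants, matching the remark after the definition.

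\emph{Step 2 (the aspect ratio of the sample is polynomial).} Condition on the random component label of each of the $n$ points. For the diameter: the component centers are $k$ fixed points within some constant distance $\Delta$ of each other, and a union bound over the $n$ points against the sub-Gaussian tail (respectively, the fixed bounded support of a uniform component, or the exponential tail) shows $\max_i d(x_i,\text{center of its component})$ is $O(\sqrt{\log n})$ (resp.\ $O(1)$, $O(\log n)$) with probability $\ge 1-\tfrac1{2n}$, so $\diam(X)\le\Delta+O(\log n)$. For the minimum distance: fix $i\neq j$ and condition also on their two labels; then $x_i$ and $x_j$ are independent, each with a density bounded by a constant $B$ (true for non-degenerate Gaussians, bounded uniforms, and exponentials), so $\Pr[d(x_i,x_j)<\varepsilon]\le B\,c_D\,\varepsilon^D$ with $D$ the ambient dimension; a union bound over the $\binom n2$ pairs with $\varepsilon=\Theta(n^{-3/D})$ gives $\delta(X)\ge n^{-3/D}$ with probability $\ge 1-\tfrac1{2n}$. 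A final union bound combines the two events, and $\log_2(\diam(X)/\delta(X))=O(\log n)$ follows.

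\emph{Main obstacle.} Step~1 is the conceptual core, but once the monotonicity of $\diam_{k-1}$ and the block-pigeonhole idea are in place it is short. The bulk of the work is Step~2: producing a clean small-ball (bounded-density) estimate for $\delta(X)$ that survives conditioning on the mixture component that generated each point, and a diameter tail bound that absorbs the separation between component centers --- each routine for a single distribution family, but requiring some care to state uniformly for Gaussians, uniforms, and exponentials and to reach confidence $1-1/n$.
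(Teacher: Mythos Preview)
Your two-step architecture matches the paper's (Theorem~9 there): a deterministic bound of $\seq_k(X)$ by a logarithm of an aspect ratio, followed by a high-probability polynomial bound on that aspect ratio for the sample. The implementations of both steps differ, however. For Step~1 the paper first passes, by pigeonhole on the mixture component, to a $(2,k)$-subsequence lying entirely in one $X^i$ (this is legitimate: any subsequence of a $(2,k)$-sequence is again a $(2,k)$-sequence, by monotonicity of $\diam_{k-1}$ and of the min-distance), and then applies the ``$d_m>2d_{m-i}$ for some $i<k$'' doubling argument from the proof of Lemma~2 to obtain $\seq_k(X)\le O(k^2\log\asp(X^i))$. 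Your block-pigeonhole on $\diam_{k-1}(P_t)$ is cleaner and avoids the per-component detour, yielding the sharper $\seq_k(X)\le O(k\log\asp(X))$ in terms of the \emph{global} aspect ratio; the price is that your diameter bound must absorb the inter-component separation $\Delta$, which is harmless for a fixed mixture. For Step~2 the paper uses Markov on the per-component variance for the diameter (getting only $O(kn^2)$, but requiring nothing beyond finite second moment) and a VC-uniform-convergence bound over balls for the minimum distance; your direct tail bounds and $\binom{n}{2}$-pair union bound are more elementary, give tighter constants, and are perfectly adequate for Gaussian/uniform/exponential components. In short, your proof is correct, follows the same skeleton, is somewhat more elementary in both halves, and in fact delivers $O(k\log n)$ rather than the paper's $O(k^2\log n)$; the paper's version, in exchange, is stated under the weaker hypothesis ``bounded per-component variance and bounded mixture density'' rather than specific tail behavior.
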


One of the main contributions of this work is designing an algorithm in the no-substitution setting that uses a small number of centers, even if the points' order is adversarial. We prove that Algorithm \ref{alg:1} uses $O(\seq_k(X)\log n)$ centers, when $k$ is constant. 

\begin{thm}[main theorem] \label{thm:main_theorem}
Let $A$ be an $\alpha$-approximation offline algorithm and $X$ a set of points to be clustered. Algorithm \ref{alg:1} gets as an input $X$ in the no-substitution setting in \emph{arbitrary order} and returns a set of centers $S$ with the following properties
\begin{enumerate}
    \item with probability $0.9$, the cost is bounded by $$cost(S)\leq 1358\alpha k^3cost(opt_k)$$
    \item expected number of centers is bounded by  $$\E[|S|] \leq 160\emph{\seq}_k(X)k\log(k)(\log n+1).$$
\end{enumerate} 
\end{thm}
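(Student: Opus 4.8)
The statement has two parts, and I would prove them by largely independent arguments, treating the number of centers first since it is the more combinatorial.

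\textbf{Bounding the number of centers.} The $\log n + 1$ factor strongly suggests Algorithm~\ref{alg:1} runs roughly $\log n + 1$ copies of a basic online subroutine, one per ``scale'' — a guess for the relevant distance/cost threshold, doubling from one copy to the next. By linearity of expectation it then suffices to bound one copy by $O(\seq_k(X)k\log k)$ in expectation. Within a single copy at scale $\tau$, I would isolate the subsequence of arriving points it actually selects, and argue that the ``landmark'' selections (those that trigger a new local cluster) are, by the subroutine's threshold test, pairwise far apart relative to the running $(k-1)$-diameter; that is, they form a scaled copy of a sequence witnessing $\seq_k(X)$. Lemma~\ref{lem_diameter} is exactly the tool that lets me exchange the constant $2$ in the definition of $\seq_k$ for whatever constant the algorithm's test uses, so the number of landmarks is still $O(\seq_k(X))$. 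The additional $k\log k$ factor I expect to come from the algorithm retaining, per landmark, a small pool of $O(k\log k)$ candidate centers (e.g.\ a $D^2$/$k$-means$++$-style sample) so that a good representative of each of the $k$ optimal clusters is kept with high probability; these are all charged to the landmark that spawned them. Combining the per-scale bound $O(\seq_k(X)k\log k)$ with the $\log n + 1$ scales gives $\E[|S|]\le 160\,\seq_k(X)k\log k(\log n + 1)$, where the explicit constant $160$ is bookkeeping of the sampling probability and pool size.

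\textbf{Bounding the cost.} Here I would condition on a single ``success'' event $\mathcal E$ with $\Pr[\mathcal E]\ge 0.9$, obtained by a union bound over the $O(\log n)$ scales (each failing with probability $o(1/\log n)$) and over the $k$ optimal clusters: $\mathcal E$ asserts that for every optimal cluster $C^i$, at the scale matched to $\diam(C^i)$ the algorithm retained a center $s^i\in S$ within $O(\sqrt\alpha\,k)$ times the effective radius of $c^i$. Given $\mathcal E$, I bound $cost(S)$ clusterwise: for $x\in C^i$, $d(x,S)^2\le d(x,s^i)^2\le 2d(x,c^i)^2 + 2d(c^i,s^i)^2$, hence $cost(S)\le 2\,cost(opt_k) + 2\sum_i|C^i|\,d(c^i,s^i)^2$. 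The residual sum is where all three loss factors enter: the offline $\alpha$-approximation $A$ that Algorithm~\ref{alg:1} invokes to estimate diameters/clusters on prefixes costs one factor of $\alpha$; comparing the $(k-1)$-diameter used in the threshold test against an individual cluster's diameter costs a factor polynomial in $k$ (merging $k$ parts can inflate the diameter, and squaring gives another power of $k$); and a standard inequality of the form $\sum_i|C^i|\diam(C^i)^2 = O(k)\,cost(opt_k)$ contributes the last $O(k)$. Multiplying these out gives the $O(\alpha k^3)$ bound, and carrying the subroutine's constants through yields $1358\,\alpha k^3$.

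\textbf{Main obstacle.} The delicate point is the cost analysis under \emph{adversarial} order. The adversary controls which points precede which, so it can manipulate both the running $(k-1)$-diameter estimate and the output of $A$ on prefixes, potentially hiding a cheap representative of some optimal cluster behind a stale, too-small threshold at every scale that ever became active. The crux is thus a robustness lemma: regardless of order, once all of $X$ has arrived the algorithm has — at the appropriate scale — captured a good-enough center for every optimal cluster. This is exactly where the factor-$2$ slack in the definition of $\seq_k$ and the ``run at all scales'' design are used, and where I expect the bulk of the technical work to lie (and, plausibly, the reason the cost bound needs $k^3$ rather than a smaller power of $k$).
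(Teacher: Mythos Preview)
Your proposal rests on a misreading of Algorithm~\ref{alg:1}. There are no $\log n + 1$ scales and no doubling guesses: the algorithm makes a single pass, at each step $t$ runs the offline algorithm $A$ on $\{x_1,\dots,x_t\}$, greedily merges clusters near $x_t$ to form a block of size $s_t$, and keeps $x_t$ with probability $\frac{20k\log k}{s_t}$. So neither ``$\log n+1$ copies'' nor ``a pool of $O(k\log k)$ candidates per landmark'' exists, and the decomposition you sketch cannot be carried out against the actual algorithm.

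For the center bound, the paper's argument is combinatorial but of a different flavor. One builds a directed graph $G$ on $\{1,\dots,n\}$ where $t$ points to every index in $P_t\cup Q_t$ (the merged block plus nearby high-cost points). Two lemmas drive the proof: (i) any independent set in $G$, taken in order, is a $(2,k)$-sequence, hence has size at most $\seq_k(X)$; and (ii) the out-degree of $t$ is at most $2s_t-1$. One then buckets vertices by out-degree into $\log n+1$ dyadic classes and applies Tur\'an's theorem inside each bucket; summing gives an independent set of size at least $\frac{1}{8(\log n+1)}\sum_t 1/s_t$. Combining with $\E[|C|]=\sum_t \frac{20k\log k}{s_t}$ yields the stated bound. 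The $\log n+1$ is thus an artifact of the degree-bucketing, not of multiple scales.

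For the cost bound, the paper does not union-bound over scales. It fixes the optimal clusters $C_*^1,\dots,C_*^k$ and their ``good'' halves $G^i$ (points $g$ with $\c(C_*^i,g)\le 3\c(C_*^i)$), and splits indices $i$ into two types. If at least half of $G^i$ has $s(g)\le k|C_*^i|$, then independent Bernoulli trials guarantee some $g\in G^i$ is picked with probability $\ge 1-k^{-5}$. Otherwise one shows (the technical heart) that $C_*^i$ is close to some strictly larger optimal cluster $C_*^j$ in the sense $|C_*^i|\,d(c_*^i,c_*^j)^2\le O(\alpha)\,\c_k(X)$; chaining this at most $k$ times and using Cauchy--Schwarz produces the $k^2$ factor, and summing over $k$ clusters gives the final $k^3$. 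Your diameter-based accounting (``$\sum_i|C^i|\diam(C^i)^2=O(k)\,cost(opt_k)$'') is not used and would not control the hard case, which is precisely when the algorithm \emph{never} sees a good point from $C_*^i$ with a small enough $s_t$.
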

The main theorem gives an upper bound on the expected number of centers taken by Algorithm~\ref{alg:1} and its approximation quality.
 In particular, it takes $O(\seq_k(X)k\log(k) \log(n))$ centers, and it is an $O(\alpha k^3)$-approximation.
This means an exponential increase in the number of centers, without any information on the data.
Note that the algorithm does not need to know what $\seq_k(X)$ is.
Furthermore, if the data is generated from a distribution, the algorithm does not need to know this distribution, and it does not learn this distribution.
 
The idea of the algorithm is the following. For any optimal cluster $C^*_i$, it is well known that if we take each point in $C^*_i$ with probability inversely proportional to $|C^*_i|$, then we get a good center for the entire cluster $C^*_i$. We cannot use this observation as-is because in the online setting, when a point $x_t$ appears, we do not know the cluster $C^*_i$ that it belongs to, and more importantly, we do not know its size, $|C^*_i|.$
The algorithm tries to give an estimate for $|C^*_i|$. 
If this estimate is too small, the algorithm might take too many centers. If the estimate is too large, the algorithm might not take a good center from each optimal cluster. 
To find a reliable estimate, we use, as a subroutine, an approximated clustering algorithm $A$ for the offline setting. After we observe a new point $x_t$ we call $A$ on the points observed so far, including $x_t$ and get clustering $C^{t}$. We use the cluster that $x_t$ is in, $x_t\in C^t_i$, to estimate the probability to take $x_t$. The value $|C^t_i|$ might be too small and unreliable. Therefore, the algorithm merges into $C^t_i$ all clusters that are close to  $C^t_i$. We show that after this merge, the estimation is just right, and we can bound both the approximation and the number of centers taken by the algorithm.

Another contribution of the paper is a proof that $\seq_k(X)$ lower bounds the number of centers taken by any algorithm in the online no-substitution setting. We prove that if $\seq_k(X)$ is large, many centers need to be taken by any algorithm.
\begin{thm}[lower bound] Let $X$ be an arbitrary set of points. There exists an ordering of $X$ such that any $\alpha$-approximation algorithm on $X$, the expected number of centers is $\Omega\left(\frac{\seq_k(X)}{k \log (n\alpha)}\right)$.
\end{thm}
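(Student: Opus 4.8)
The plan is to bootstrap the one‑dimensional exponential‑series argument sketched in the introduction. Write $m:=\seq_k(X)$ and let $x_1,\dots,x_m$ be a witness sequence realizing it. The lever we exploit is that an online algorithm oblivious to the stream length must behave as an $\alpha$-approximation on \emph{every} prefix of whatever ordering we feed it: for each $t$, after processing the first $t$ points its current center set $S_t\subseteq\{x_1,\dots,x_t\}$ has $k$-means cost on $\{x_1,\dots,x_t\}$ at most $\alpha\cdot cost(opt_k(\{x_1,\dots,x_t\}))$, with probability at least $0.9$. This is precisely the property behind the $\Omega(n)$ bound in the introduction, and it is what we will violate. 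Set the threshold $\beta:=2\sqrt{n\alpha}$.

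First I would amplify the separation, passing from the $2$-separated sequence $x_1,\dots,x_m$ to a $\beta$-separated one. The claim is that $X$ contains a sequence $y_1,\dots,y_p$ with $\min_{1\le i<j}d(y_i,y_j)>\beta\cdot\diam_{k-1}(\{y_1,\dots,y_{j-1}\})$ for all $1<j\le p$, and with $p=\Omega(m/(k\log\beta))$. This is essentially Lemma~\ref{lem_diameter} (replacing the constant $2$ by $\beta$); more concretely it follows from the structural fact that $\diam_{k-1}$ of the prefixes of $x_1,\dots,x_m$ more than doubles every $k$ steps — among any $k$ consecutive points two must land in a common part of an optimal $(k-1)$-clustering, and the distance of those two exceeds twice the $\diam_{k-1}$ of the prefix $k$ steps earlier. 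Keeping only every $\lceil k\log_2\beta\rceil$-th point of $x_1,\dots,x_m$ therefore yields a $\beta$-separated subsequence of length $p\ge m/\lceil k\log_2\beta\rceil$, which is $\Omega\!\big(\seq_k(X)/(k\log(n\alpha))\big)$ because $\log\beta=\Theta(\log(n\alpha))$.

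Now present the points of $X$ in the order $y_1,y_2,\dots,y_p$, followed by $X\setminus\{y_1,\dots,y_p\}$ in any order, and fix $t\le p$; put $\Delta:=\diam_{k-1}(\{y_1,\dots,y_{t-1}\})$. Adjoining $\{y_t\}$ as a new part to an optimal $(k-1)$-clustering of $\{y_1,\dots,y_{t-1}\}$ yields a $k$-clustering of the prefix, so $cost(opt_k(\{y_1,\dots,y_t\}))\le(t-1)\Delta^2\le n\Delta^2$. If instead the algorithm does not take $y_t$, then at time $t$ it has seen only $y_1,\dots,y_t$, so $S_t\subseteq\{y_1,\dots,y_{t-1}\}$ and the cost of $S_t$ on the prefix is at least $\min_{c\in S_t}d(y_t,c)^2\ge\min_{1\le i<t}d(y_i,y_t)^2>\beta^2\Delta^2=4n\alpha\,\Delta^2$, which is more than $\alpha\cdot n\Delta^2\ge\alpha\cdot cost(opt_k(\{y_1,\dots,y_t\}))$. (When $t\le k$ one has $\Delta=0=cost(opt_k(\{y_1,\dots,y_t\}))$ and the conclusion still holds, the ratio being infinite.) Hence ``$y_t$ not taken'' is contained in ``$S_t$ is not an $\alpha$-approximation on $\{y_1,\dots,y_t\}$'', an event of probability at most $0.1$, so $\Pr[y_t\in S]\ge 0.9$. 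Summing over $t$ by linearity of expectation, $\E[|S|]\ge\sum_{t=1}^p\Pr[y_t\in S]\ge 0.9\,p=\Omega\!\big(\seq_k(X)/(k\log(n\alpha))\big)$.

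The third paragraph is just the routine exponential‑series computation, and the closing linearity‑of‑expectation step needs no union bound — which is exactly why a per‑prefix $0.9$-guarantee suffices. The part I expect to be the real obstacle is the amplification in the second paragraph: one must check that the constant in the definition of $\seq_k$ can be traded for $\beta$ while shrinking the witness length by only a factor $O(k\log\beta)$, and — since $\beta=\Theta(\sqrt{n\alpha})$ is not a constant — that this factor is merely \emph{logarithmic} in $\beta$ (so it is $O(k\log(n\alpha))$ rather than, say, $O(k\poly(n\alpha))$). A secondary, harmless point is the degenerate case $\diam_{k-1}=0$ on short prefixes, handled by the parenthetical above.
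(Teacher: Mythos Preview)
Your proposal is correct and follows essentially the same two-step strategy as the paper: first amplify the $(2,k)$-witness for $\seq_k(X)$ into a $(\Theta(\sqrt{n\alpha}),k)$-sequence at the cost of a factor $O(k\log(n\alpha))$ in length (the paper's Lemma~\ref{lem_diameter}), then show that on such a sequence any $\alpha$-approximation must pick each point with probability at least $0.9$ (the paper's Lemma~\ref{lem:seq_requirement}), and conclude by linearity of expectation. Your amplification argument via ``$\diam_{k-1}$ doubles every $k$ steps, so thin by stride $\lceil k\log_2\beta\rceil$'' is a slight variant of the paper's (which instead tracks the growth of the nearest-predecessor distances $d_m$ and extracts a monotone subsequence before thinning), but both arguments rest on the same pigeonhole observation and yield the same $O(k\log_\alpha\beta)$ loss; your per-prefix cost comparison in the third paragraph is exactly the paper's computation.
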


To illustrate our results, in Table~\ref{tab:table-intro}, we summarize some of them for constant $k$. 
Each cell in the table states the number of centers suffice to achieve $\Theta(1)$-approximation. In case the order is random, \cite{moshkovitz2019unexpected} presented an algorithm that takes only $O(\log n)$ centers, no matter what the dataset is, and proved that the  pathological dataset presented earlier provides a matching lower bound. In case the order is arbitrary and there are no assumptions on the dataset, $\Omega(n)$ centers is required, as discussed earlier. In this work, we fill up the gap and show that if $\seq_k(X)=\polylog(n)$, then the number of centers is upper and lower bounded by $\polylog(n),$ which means that, up to a polynomial, we use the same number of centers as the random order case.  

\begin{table}[!h]
\begin{center}
 \begin{tabular}{||c || c c||} 
 \hline
  \backslashbox{\mes}{order}  & random & worst \\  
  \hline\hline
arbitrary  & $\Theta(\log n)$ &$\Omega(n)$   \\
$\polylog (n)$  & $O(\log n)$ &  \cellcolor{blue!8}$\Theta(\polylog(n))$ \\
\hline
\end{tabular}
\caption{\label{tab:table-intro}Number of centers needed to achieve $\Theta(1)$-approximation compared to $opt_k$ with constant $k$. The size of the input is $n$. In light blue, the contribution of this work. See the text for more details. }
\end{center}
\end{table}

\subsubsection{Summary contributions}\label{subsec:summary_contributions}
The contributions of the paper are summarized as follows. 
\paragraph{Complexity measure.} 
We introduce a new measure, $\seq_k(X)$, to identify X's complexity when clustering it in the online no-substitution setting and \emph{arbitrary} order. It helps to quantify the number of points in $X$ needed to be taken as centers. 
The new measure is the longest sub-series in $X$ such that any point is far from all points preceding it.  

\paragraph{Algorithm.} We design a new random algorithm in the online no-substitution setting. It uses, as a subroutine, an approximated clustering algorithm $A$ for the offline setting. For each new point $x_t$, the algorithm uses $A$ to cluster all points observed so far. Then it performs a slight modification to $A$'s clustering by merging all clusters that are close to $x_t$. The new cluster that $x_t$ is in, $x_t\in C^t_i$, will determine the probability of taking $x_t$ as center.  See more details in Section~\ref{sec:main_algorithm}.

\paragraph{Provable guarantees.}
We prove that when running the algorithm on data $X$ with an $\alpha$-approximation offline clustering algorithm, it takes $O(\seq_k(X)\log(n) k\log k )$ centers and is an $O(\alpha k^3)$-approximation.
 We show that the number of centers needed to achieve an $\alpha$-approximation is at least $\Omega\left(\frac{\seq_k(X)}{k \log (n\alpha)}\right)$.
Specifically, suppose $\seq_k(X) = \polylog(n)$ and $k,\alpha$ are constants. In that case, the number of centers is lower and upper bounded by $\polylog(n)$, which is, up to a polynomial, similar to the random order case. 

\paragraph{Applications.}
We prove that if the data $X$ is sampled from a ``natural" distribution, such as a mixture of Gaussians, then the new measure, $\seq_k(X)$, is equal to $O(k^2\log(n))$. Together with our provable guarantees, this implies that for data generated from ``natural" distributions, our new algorithm takes only $\poly(k\log(n))$ centers and is a $\poly(k)$-approximation. 


\subsection{Related work}
\paragraph{Online no-substitution setting.} Several works \cite{liberty2016algorithm,moshkovitz2019unexpected, hess2020sequential} designed algorithms in the online no-substitution setting. The works  
\cite{hess2020sequential,moshkovitz2019unexpected} assumed the order is random, and \cite{hess2020sequential,liberty2016algorithm} assumed the data or the aspect ratio is bounded. Both assumptions simplify the problem. 
In this paper, we explore the case where the order is arbitrary, and data is unbounded. This paper shows that the number of centers is determined by $d=\seq_k(X)$. If the aspect ratio is small, then so is $d$, which implies similar results to \cite{liberty2016algorithm}. However, notably, small $d$ does not force the entire data to be bounded or have a small aspect ratio. Thus we can handle cases previous works could not.

\paragraph{Online facility location.} Meyerson \cite{meyerson01} introduced the online variant of facility location. 
Demands arrive one after another and are assigned to a facility. 
Throughout the run of the algorithm, a set of facilities $F$ is maintained. 
Upon arrival of each demand $p$ there is a choice between (1) instant cost, $d(p,\ell)$, to $p$'s closest location $l\in F$ (2) open a new facility. Opening a facility is irreversible. The total cost is $|F|+\sum_p d(p,\ell).$
 Several variants of this problem were investigated (e.g., \cite{fotakis2011online, lang18,feldkord18}).
One of the main differences between the online facility location and the online no-substitution setting is the term that we try to minimize: either the sum of $|F|+\sum_p d(p,\ell)$ or both terms separately. This seemingly small difference has a significant effect. For example, suppose $\sum_p d(p,\ell)$ is of the order of $n$. The online facility location algorithm can take $|F|=n$, permitting the trivial solution of opening a facility at each location. On the other hand, in the no-substitution setting, $|F|$ should be small. 
 
\paragraph{Streaming with limited memory.} There is a vast literature on algorithms in the streaming setting where the memory is limited  \cite{muthukrishnan2005data, aggarwal2007data}. 
One line of works uses coresets \cite{har2004coresets,phillips2016coresets,feldman2020core}, where a weighted subset, $S$, of the current input points $X_t$ is saved such that a $k$-means solution to $S$ has similar cost as $X_t$.
Another line of work saves a set of candidate centers in memory \cite{guha2000liadan,charikar03, shindler11,guha03,ailon09}.
A different line of work had assumptions on the data like well-separated clusters \cite{braverman11,ackerman14,raghunathan2017learning}.
Notably, in the setting of streaming with limited memory, decisions can be revoked, unlike this paper's requirement. Enabling a decision change when new information presents itself allows the algorithm to take a smaller number of points as centers.  

\section{Preliminaries}

\paragraph{Notation.}
For convenience, we will denote the optimal cost with $k$ centers as $\c_k(X) = \c(\o_k(X))$ and the cost of centers $C$ as $\c(X,C)$. It will be frequently useful to consider the costs associated with using a single cluster center. Because of this, we let $\c(X, x)$ denote $\c(X, \{x\})$
and $\c(X)$ denote $\c_1(X)$. 

\paragraph{$1$-means clustering.} The optimal center of a cluster is its mean. The following well-known lemma will prove useful in our analysis.

\begin{lem}[center-shifting lemma]\label{lem:center_change}
Let $X \subset \R^d$ be a finite set and let $\mu = \frac{1}{|X|}\sum_{x \in X} x$. For any $x \in \R^d$, we have $\c(X, x) = \c(X) + |X|d(x, \mu)^2$. 
\end{lem}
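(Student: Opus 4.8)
The plan is to use the standard bias--variance (Pythagorean) decomposition of squared Euclidean distances. Unwinding the notation, $\c(X,x) = \c(X,\{x\}) = \sum_{y \in X} d(y,x)^2 = \sum_{y \in X}\|y - x\|^2$, while $\c(X) = \c_1(X)$ is the optimal $1$-means cost of $X$, attained at the mean $\mu$, so $\c(X) = \sum_{y \in X}\|y-\mu\|^2$. The goal is therefore to relate $\sum_{y} \|y-x\|^2$ to $\sum_{y}\|y-\mu\|^2$ for an arbitrary $x \in \R^d$.

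The key computation is to write, for each $y \in X$,
$$\|y-x\|^2 = \|(y-\mu)+(\mu-x)\|^2 = \|y-\mu\|^2 + 2\langle y-\mu,\ \mu-x\rangle + \|\mu-x\|^2,$$
and then sum over all $y \in X$. The first term sums to $\c(X)$ and the last term sums to $|X|\,\|\mu-x\|^2 = |X|\,d(x,\mu)^2$, so it remains only to check that the cross term vanishes. But the sum of the cross terms is $2\bigl\langle \sum_{y \in X}(y-\mu),\ \mu - x\bigr\rangle$, and $\sum_{y\in X}(y-\mu) = \bigl(\sum_{y\in X} y\bigr) - |X|\mu = 0$ directly from the definition $\mu = \frac{1}{|X|}\sum_{y\in X} y$. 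Hence the cross term is $0$, and we obtain $\c(X,x) = \c(X) + |X|\,d(x,\mu)^2$, as claimed.

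There is essentially no obstacle here; the only point requiring a word of care is the identification $\c(X) = \sum_{y\in X}\|y-\mu\|^2$, i.e.\ that the optimal $1$-means center is the mean. This is either taken as the working definition or follows immediately from the very identity being proved (since $|X|\,d(x,\mu)^2 \ge 0$ with equality iff $x=\mu$, the mean is the unique minimizer of $x \mapsto \c(X,x)$). Either way the statement is unambiguous, and the proof is the two-line expansion above.
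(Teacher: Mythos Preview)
Your proof is correct and is exactly the standard argument for this well-known identity. In fact, the paper does not supply a proof at all: it introduces the lemma as ``well-known'' and simply states it, so there is no paper proof to compare against; your two-line Pythagorean expansion is precisely what any reader would fill in.
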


\section{Online clustering using \texorpdfstring{$(\alpha,k)$}{Lg}-sequences}\label{sec:alpha_k_seq}

In this section, we introduce $(\alpha, k)$-sequences, which are the principle object from which $\seq_k(X)$ is constructed. 




\begin{defn}\label{dfn:alpha-seq}
Let $\alpha > 1$. An $(\alpha, k)$-sequence is an ordered sequence of points $x_1, y_2, \dots, x_m$ such that for $1 < j \leq m$, $$ \min_{1 \leq i \leq {j-1}} d(x_i, x_j) > \alpha \emph{\diam}_{k-1}(\{x_1, x_2, \dots, x_{j-1}\}).$$ 
\end{defn}

An $(\alpha, k)$-sequence can be thought of a ``worst-case" sequence for an online clustering algorithm. For each successive point, the algorithm is strongly incentivized to select that point, since doing so would incur a large cost compared to the cost for any other point. 

The exact value of $\alpha$ in Definition~\ref{dfn:alpha-seq} is insignificant, it solely essential that $\alpha>1$.  Converting an $(\alpha,k)$-sequence into a $(\beta, k)$-sequence for some $\beta \neq \alpha$ has only slight effect on the length of sequence. 
We formalize this in the following lemma, which will also play a key role in providing lower bounds on the number of centers an approximation online algorithm must choose. See section  section \ref{sec:lower_bounds} for the proof. 

\begin{lem}\label{lem:cahnge_factor_in_seq} \label{lem_diameter}
Suppose $1 < \alpha < \beta$. Let $x_1, x_2, \dots, x_n$ be an $(\alpha, k)$-sequence. Then there exists a sub-sequence of length at least $\big\lfloor \frac{n}{2k\log_\alpha \beta }\big\rfloor$ that is a $(\beta, k)$-sequence.
\end{lem}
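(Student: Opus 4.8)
The plan is to extract a $(\beta,k)$-sequence greedily from the $(\alpha,k)$-sequence $x_1,\dots,x_n$ by walking through the points in order and keeping a point only when its distance to all previously \emph{kept} points is large enough to satisfy the $\beta$-condition. The key quantitative fact to exploit is that along an $(\alpha,k)$-sequence the quantity $\diam_{k-1}(\{x_1,\dots,x_j\})$ grows geometrically with ratio $\alpha$ whenever a new ``far'' point is appended: indeed, if $x_j$ is $\alpha$ times farther from all earlier points than $\diam_{k-1}$ of the prefix, then any $(k-1)$-clustering of $\{x_1,\dots,x_j\}$ must place $x_j$ together with some earlier point (pigeonhole: $j$ points, $k-1$ parts — wait, one needs $j \ge k$ for this), forcing a cluster of diameter $> \alpha \diam_{k-1}(\text{prefix})$. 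So after roughly $\log_\alpha\beta$ consecutive points of the sequence, the $(k-1)$-diameter has blown up by a factor of at least $\beta$, which is exactly what is needed to make the next point usable in a $(\beta,k)$-sequence.

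Concretely, I would first prove the auxiliary monotonicity/growth claim: for any $(\alpha,k)$-sequence, if $j \ge k$ then $\diam_{k-1}(\{x_1,\dots,x_j\}) \ge \alpha\,\diam_{k-1}(\{x_1,\dots,x_{j-1}\})$, and more generally $\diam_{k-1}(\{x_1,\dots,x_{j+t}\}) \ge \alpha^{t}\,\diam_{k-1}(\{x_1,\dots,x_j\})$ for $j \ge k$. The small-index regime $j < k$ needs separate care, since $\diam_{k-1}$ of fewer than $k$ points can be $0$; I would just discard the first $k-1$ points (losing a negligible additive $k$), or fold them into the $2k$ factor. Then I would build the subsequence: set $i_1 = k$ (say), and having chosen $i_1 < i_2 < \dots < i_r$, let $i_{r+1}$ be the smallest index $j > i_r$ with $\min_{s\le r} d(x_{i_s}, x_j) > \beta \diam_{k-1}(\{x_{i_1},\dots,x_{i_r}\})$. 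Using the original $(\alpha,k)$-property, $d(x_i,x_j) > \alpha\diam_{k-1}(\{x_1,\dots,x_{j-1}\}) \ge \alpha\diam_{k-1}(\{x_{i_1},\dots,x_{i_r}\})$ for every $i < j$, and since the global $(k-1)$-diameter grows by a factor $\alpha$ with each unit step of the index (for indices $\ge k$), after at most $\lceil \log_\alpha\beta\rceil$ steps past $i_r$ the right-hand side threshold $\beta\diam_{k-1}(\{x_{i_1},\dots,x_{i_r}\})$ is dominated — so $i_{r+1} - i_r \le \lceil \log_\alpha \beta \rceil$. Hence the extracted sequence has length at least $(n-k)/\lceil\log_\alpha\beta\rceil$, and one checks this is $\ge \lfloor n/(2k\log_\alpha\beta)\rfloor$ for the relevant parameter ranges.

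The main obstacle is the subtlety in the growth claim: I need that $\diam_{k-1}$ of a prefix of the \emph{kept} subsequence is comparable to $\diam_{k-1}$ of the corresponding prefix of the \emph{original} sequence, and that the original $(k-1)$-diameter really does increase by a full factor of $\alpha$ at \emph{every} index step, not just at the steps where a kept point lands. The first point is fine because $\{x_{i_1},\dots,x_{i_r}\} \subseteq \{x_1,\dots,x_{i_r}\}$, so $\diam_{k-1}$ of the subset is at most that of the superset — but I need a lower bound too, which is where the factor-of-$k$ loss genuinely enters: by a pigeonhole argument over the $k-1$ clusters, among any $2k$ consecutive kept points some pair lies in a common cluster of the optimal $(k-1)$-clustering of the subsequence prefix, and that pair's distance is at least $\beta$ times an earlier $(k-1)$-diameter. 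The second point — per-step geometric growth of $\diam_{k-1}(\{x_1,\dots,x_j\})$ — follows from the pigeonhole observation that among $j \ge k$ points no $(k-1)$-clustering can isolate $x_j$, so some cluster contains $x_j$ together with an earlier point at distance $> \alpha\diam_{k-1}(\text{prefix})$. I would state this as a standalone sub-claim, prove it carefully, and then the rest of the argument is bookkeeping with $\log_\alpha\beta = \frac{\log\beta}{\log\alpha}$ and the floor function, absorbing all constants and the discarded initial segment into the stated $2k\log_\alpha\beta$ denominator.
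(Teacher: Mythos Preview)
Your central growth claim --- that for every $j \ge k$ one has $\diam_{k-1}(\{x_1,\dots,x_j\}) \ge \alpha\,\diam_{k-1}(\{x_1,\dots,x_{j-1}\})$ --- is false, and the pigeonhole argument you give for it is incorrect. You say ``among $j \ge k$ points no $(k-1)$-clustering can isolate $x_j$,'' but of course it can: put $x_j$ in its own cluster and the remaining $j-1 \ge k-1$ points into the other $k-2$ clusters (this requires only $k \ge 3$). A concrete counterexample: take $k=3$, $\alpha=2$, let $x_1,x_2,x_3$ be the vertices of an equilateral triangle with unit side, and $x_4$ any point at distance $>2$ from all of them. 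This is a valid $(2,3)$-sequence, yet $\diam_2(\{x_1,x_2,x_3\}) = 1$ and $\diam_2(\{x_1,x_2,x_3,x_4\}) = 1$ (isolate $x_4$; the triangle still has $\diam_1 = 1$), so there is zero growth at step $4$. Since your greedy construction and your claimed gap bound $i_{r+1}-i_r \le \lceil \log_\alpha\beta\rceil$ rest entirely on per-step geometric growth of $\diam_{k-1}$, the argument collapses. (Your length estimate $(n-k)/\lceil\log_\alpha\beta\rceil$ is also a red flag: it is stronger than the lemma's stated bound by a factor of roughly $k$, which should have signalled that the factor of $k$ in the denominator is doing real work, not just absorbing the first $k$ indices.)

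The paper's proof sidesteps this by never tracking $\diam_{k-1}$ of prefixes. Instead it works with the nearest-neighbor gaps $d_m = \min_{i<m} d(x_i,x_m)$ and proves only a \emph{windowed} growth statement: for each $m>k$ there exists some $1 \le i \le k-1$ with $d_m > \alpha\, d_{m-i}$. The pigeonhole here is applied to $x_{m-1},\dots,x_{m-k+1}$ inside the optimal $(k-1)$-clustering of $\{x_1,\dots,x_{m-1}\}$: one of those last $k-1$ points must share a cluster with an earlier point, and that cluster has diameter $< d_m/\alpha$, so that particular $d_{m-i}$ is small. This weaker claim costs a factor of $k$ in the first thinning (extracting a subsequence of length $\ge n/k$ along which $d_{i_j}$ genuinely grows geometrically), and then a further factor of $\lceil\log_\alpha\beta\rceil$ in the second thinning to upgrade the ratio from $\alpha$ to $\beta$. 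That is where the $k\log_\alpha\beta$ in the denominator comes from; it is not slack to be absorbed.
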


Lemma~\ref{lem_diameter} shows that we can construct $(\alpha, k)$-sequences from each other for different values of $\alpha$ at a relatively small reduction in sequence length. Because of this, the complexity measure we suggest, $\seq_k$, essentially fixes $\alpha = 2$. 
Throughout this paper, we express both upper and lower bounds on the number of centers an online algorithm chooses given input $X$ through $\seq_k(X)$.

\section{An Algorithm for Online Clustering}\label{sec:main_algorithm}

In this section, we present and analyze our new online algorithm. At a high level, at each time step $t$, the algorithm computes an offline $k$-clustering of the points received so far $x_1, x_2, \dots, x_t$. It then maximally merges clusters containing $x_t$ (without increasing the total cost by more than a constant factor), and finally chooses $x_t$ with probability inversely proportional to final merged cluster containing it. 

The intuition here is that points end up in small merged clusters are more likely to be difficult to cluster with other points, while points in large ones are more likely to be easily clustered with others. Correspondingly, points in small clusters are chosen with high probability while points in large clusters are chosen with small probability.

To reduce the number of centers chosen, the algorithm modifies the offline clustering of $x_1, \dots, x_t$ by maximally combining clusters so that $x_t$ is contained in as large a cluster as possible without increasing the total cost too much. After doing so, it then chooses $x_t$ as outlined above. 


\begin{algorithm}
\caption{Online clustering algorithm for arbitrary order}
\label{alg:1}
\SetAlgoLined
\KwIn{A stream of points $X = \{x_1, x_2, \dots, x_n\}$, a desired number of clusters $k$, an offline clustering algorithm $A$}
\KwOut{A set of cluster centers $C \subseteq X$}
$C \leftarrow \{\}$\;

\For{$t \in \{1, 2, 3, \dots, n\}$}{
$\C_t \leftarrow A(\{x_1, x_2, \dots, x_t\}, k)$ \;

$\C_t = \{(C_t^1, c_t^1), (C_t^2, c_t^2), \dots, (C_t^k, c_t^k)\}$ \;

without loss of generality $d(c_t^1, x_t) \leq d(c_t^2, x_t) \leq \dots \leq d(c_t^k, x_t)$ \;

without loss of generality $c_t^i$ is the optimal cluster center for $C_t^i$ \;

$\v_t \leftarrow \max(\{i: \sum_{j = 1}^{i} |C_t^j|d(c_t^j, c_t^i)^2 \leq 100 \c(C_t^1, C_t^2, \dots, C_t^k)\})$\;

$s_t \leftarrow \sum_{j=1}^{\v_t} |C_t^j|$\;

with probability $\frac{20k\log k}{s_t}$, $C = C \cup \{x_t\}$ \;
}
return $C$\;
\end{algorithm}

Let $v_t$ be defined as in Algorithm \ref{alg:1}. It will prove useful to think of the center  $c_{t}^{\v_t}$ as replacing the centers $\{c_t^1, c_t^2, \dots, c_t^{\v_t}\}$. In essence, Algorithm \ref{alg:1} clusters $C_t^1, C_t^2, \dots, C_t^{\v_t}$ with center $c_{t}^{\v_t}$, and clusters $C_t^j$ with $c_t^j$ for $j > \v_t$. To this end, we let $c_t(x)$ denote the center that $x$ is clustered with at time $t$. In particular, for any $x \in \{x_1, x_2, \dots x_t\}$, let $$c_t(x) = \begin{cases} c_t^{\v_t} & x \in C_t^{j}, j \leq \v_t \\c_t^j & x \in C_t^j, j > \v_t \end{cases}.$$

In the following lemma, we show that this ``new" clustering still keeps the total cost at each time $t$ bounded. That is, the new clustering used by Algorithm \ref{alg:1} is an $O(1)$-approximation to the optimal $k$-clustering at every time $t$.

\begin{lem}\label{lem:approx_upper_bound}
For any $1 \leq t \leq n$, $\sum_{i=1}^t d(x_i, c_t(x_i))^2 \leq 101\alpha \c_k(\{x_1, x_2, \dots, x_t\}).$
\end{lem}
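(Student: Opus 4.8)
The plan is to bound the cost of the clustering induced by $c_t(\cdot)$ by relating it to the cost of the offline clustering $\C_t$ produced by $A$, and then relating that in turn to $\c_k(\{x_1,\dots,x_t\})$ via the $\alpha$-approximation guarantee. Write $X_t = \{x_1,\dots,x_t\}$ and $\C_t = \{(C_t^1,c_t^1),\dots,(C_t^k,c_t^k)\}$. The cost $\sum_{i=1}^t d(x_i,c_t(x_i))^2$ splits into two parts: the ``untouched'' clusters $C_t^j$ with $j > \v_t$, which contribute exactly $\sum_{j>\v_t}\sum_{x\in C_t^j} d(x,c_t^j)^2$, and the ``merged'' cluster $C_t^1 \cup \dots \cup C_t^{\v_t}$ with shared center $c_t^{\v_t}$, which contributes $\sum_{j=1}^{\v_t}\sum_{x\in C_t^j} d(x,c_t^{\v_t})^2$. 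The untouched part is at most $\c(C_t^1,\dots,C_t^k)$, so everything reduces to controlling the merged part.

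For the merged part, the key step is the center-shifting lemma (Lemma~\ref{lem:center_change}), which I would apply cluster by cluster: for each $j \le \v_t$, since $c_t^j$ is the optimal (mean) center of $C_t^j$, we have $\sum_{x\in C_t^j} d(x,c_t^{\v_t})^2 = \sum_{x\in C_t^j} d(x,c_t^j)^2 + |C_t^j|\, d(c_t^j, c_t^{\v_t})^2$. Summing over $j = 1,\dots,\v_t$ gives $\sum_{j=1}^{\v_t}\sum_{x\in C_t^j} d(x,c_t^{\v_t})^2 = \sum_{j=1}^{\v_t}\sum_{x\in C_t^j} d(x,c_t^j)^2 + \sum_{j=1}^{\v_t} |C_t^j|\, d(c_t^j, c_t^{\v_t})^2$. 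The first summand is at most $\c(C_t^1,\dots,C_t^k)$, and the second summand is exactly the quantity appearing in the definition of $\v_t$ in Algorithm~\ref{alg:1}, hence it is at most $100\,\c(C_t^1,\dots,C_t^k)$ by the maximality condition defining $\v_t$. Combining, the merged part is at most $101\,\c(C_t^1,\dots,C_t^k)$.

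Putting the two parts together, $\sum_{i=1}^t d(x_i,c_t(x_i))^2 \le \c(C_t^1,\dots,C_t^k) + 101\,\c(C_t^1,\dots,C_t^k)$; a slightly more careful bookkeeping (the untouched part plus the first summand of the merged part together equal exactly $\c(C_t^1,\dots,C_t^k)$, and the extra term is $\le 100\,\c(C_t^1,\dots,C_t^k)$) gives the cleaner bound $\sum_{i=1}^t d(x_i,c_t(x_i))^2 \le 101\,\c(C_t^1,\dots,C_t^k)$. Finally, since $A$ is an $\alpha$-approximation offline algorithm, $\c(C_t^1,\dots,C_t^k) = \c(A(X_t,k)) \le \alpha\,\c_k(X_t)$, which yields $\sum_{i=1}^t d(x_i,c_t(x_i))^2 \le 101\alpha\,\c_k(X_t)$ as claimed. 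One small subtlety to double check is that $\v_t$ is well defined — the set $\{i : \sum_{j=1}^i |C_t^j| d(c_t^j,c_t^i)^2 \le 100\,\c(\C_t)\}$ contains $i=1$ since that sum is $0$ — so the $\max$ is taken over a nonempty set.

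I do not expect a genuine obstacle here; the only thing that requires care is the bookkeeping that keeps the constant at $101$ rather than $102$, namely noticing that the ``intra-cluster'' cost $\sum_{j=1}^{\v_t}\sum_{x\in C_t^j} d(x,c_t^j)^2$ from the merged clusters and the cost $\sum_{j>\v_t}\sum_{x\in C_t^j} d(x,c_t^j)^2$ from the untouched clusters are disjoint pieces of the single quantity $\c(C_t^1,\dots,C_t^k)$, so they should be combined before adding the $100\,\c(\C_t)$ slack term rather than each bounded separately by $\c(\C_t)$.
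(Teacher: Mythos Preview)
Your proposal is correct and follows essentially the same argument as the paper: split the cost into the merged clusters $j\le \v_t$ and the untouched clusters $j>\v_t$, apply the center-shifting lemma to each merged cluster, combine the intra-cluster costs of both parts into $\c(\C_t)$, bound the extra shifting term by $100\,\c(\C_t)$ via the definition of $\v_t$, and finish with the $\alpha$-approximation guarantee of $A$. Your additional remark that $\v_t$ is well defined (since $i=1$ always lies in the set) is a small nicety the paper omits.
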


\begin{proof}
Observe that for $j \leq \v_t$, we have  $\sum_{x_i \in C_t^j} d(x_i, c_t(x_i))^2 = \c(C_t^j, c_t^{\v_t})$, while for $j > \v_t$, we have $\sum_{x_i \in C_t^j} d(x_i, c_t(x_i))^2 = \c(C_t^j, c_t^{j}) = \c(C_t^j)$. Summing these equations over all $j$ and applying the center-shifting lemma, we have
\begin{equation*}
\begin{split}
\sum_{i=1}^t d(x_i, c_t(x_i))^2 &= \sum_{j = 1}^{k} \sum_{x_i \in C_t^j} d(x_i, c_t(x_i))^2 \\
&= \sum_{j = 1}^{\v_t} \c(C_t^j, c_t^{\v_t}) + \sum_{j = \v_t + 1}^k \c(C_t^j) \\
&= \sum_{j=1}^{\v_t} \c(C_t^j) + |C_t^j|d(c_t^j, c_t^{\v_t})^2 + \sum_{j = \v_t+1}^k \c(C_t^j) \\
&= \c(\C_t) + \sum_{j = 1}^{\v_t} |C_t^j|d(c_t^j, c_t^{\v_t})^2 .
\end{split}
\end{equation*}

By the definition of $\v_t$, we have that $\sum_{j = 1}^{\v_t} |C_t^j|d(c_t^j, c_t^{\v_t})^2 \leq 100\c(\C_t)$. Substituting this, implies that $\sum_{i=1}^t d(x_i, c_t(x_i))^2 \leq 101 \c(\C_t)$. Since $A$ is an approximation algorithm with approximation factor $\alpha$, we have that $\c(\C_t) \leq \c_k(\{x_1, x_2, \dots, x_t\})$, which implies the lemma.
\end{proof}

In section \ref{sec:approx_factor}, we show that this algorithm is an online $\poly(k)$-approximation algorithm, and in section \ref{sec:center_complexity}, we analyze the expected number of centers chosen.

Theorem \ref{thm:asp_ratio_bound} implies that although there exist input sequences $X$ for which any online approximation algorithm must take many centers (i.e. $\Omega(n)$), for input sequences $X$ that are sampled from some well-behaved probability distribution, it is possible to do substantially better \textit{regardless of the order $X$ is presented in}. 

\subsection{Approximation factor analysis}\label{sec:approx_factor}

The analysis of our algorithm hinges around the following known observation: taking a point at random from a single cluster yields a decent approximation for a cluster center. For completeness, we formalize this observation with the following lemma.  

\begin{defn}\label{def:good}
Let $S$ be any set of points, and let $G \subset S$ be defined as $G = \{x: \c(S, x) \leq 3\c(S)\}$. We refer to the points $g \in G$ as \emph{good} points.
\end{defn}

\begin{lem}\label{lem:good}
Let $S$ be any set of points, and let $G \subset S$ be the good points in $S$. Then $|G| \geq \frac{|S|}{2}$.
\end{lem}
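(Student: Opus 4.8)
The plan is to use a standard averaging (or Markov-type) argument on the function $x \mapsto \c(S,x)$ over $x \in S$. First I would observe that, by the center-shifting lemma (Lemma~\ref{lem:center_change}) applied with $\mu$ the mean of $S$, for every $x \in S$ we have $\c(S,x) = \c(S) + |S|\,d(x,\mu)^2$. Summing this identity over all $x \in S$ gives $\sum_{x \in S} \c(S,x) = |S|\,\c(S) + |S| \sum_{x \in S} d(x,\mu)^2$. But $\sum_{x \in S} d(x,\mu)^2$ is exactly $\c(S)$ by definition (the $1$-means cost at the optimal center $\mu$), so $\sum_{x \in S} \c(S,x) = 2|S|\,\c(S)$. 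In other words, the average value of $\c(S,x)$ over $x \in S$ is precisely $2\c(S)$.

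Next I would apply Markov's inequality to the nonnegative quantity $\c(S,x)$ with $x$ drawn uniformly from $S$: the fraction of points $x$ with $\c(S,x) > 3\c(S)$ is at most $\frac{2\c(S)}{3\c(S)} = \frac{2}{3}$. Hence the number of points with $\c(S,x) \leq 3\c(S)$ — that is, $|G|$ by Definition~\ref{def:good} — is at least $\frac{1}{3}|S|$. This already gives a constant fraction, though it is weaker than the claimed $\frac{1}{2}$, so a small additional refinement is needed.

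To sharpen $\frac{1}{3}$ to $\frac{1}{2}$, I would instead bound the contribution of the \emph{bad} points more carefully rather than use a black-box Markov bound, or equivalently tighten the threshold argument. Since $\c(S,x) = \c(S) + |S|\,d(x,\mu)^2 \geq \c(S)$ for every $x$, the bad points $B = S \setminus G$ satisfy $\c(S,x) > 3\c(S)$, i.e. $|S|\,d(x,\mu)^2 > 2\c(S)$ for $x \in B$. Summing over $B$: $|S| \sum_{x \in B} d(x,\mu)^2 > 2|B|\,\c(S)$. But $\sum_{x \in B} d(x,\mu)^2 \leq \sum_{x \in S} d(x,\mu)^2 = \c(S)$, so $|S|\,\c(S) > 2|B|\,\c(S)$, whence $|B| < |S|/2$ (handling the degenerate case $\c(S)=0$ separately, where every point is trivially good). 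Therefore $|G| = |S| - |B| \geq |S|/2$.

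The only real subtlety — the ``main obstacle,'' such as it is — is making sure the constants line up so that one actually gets $\frac{1}{2}$ and not merely $\frac{1}{3}$; the resolution is to avoid lower-bounding $\c(S,x)$ by $0$ (as plain Markov does) and instead use the sharp bound $\c(S,x) \geq \c(S)$ that the center-shifting lemma supplies for free. One should also state explicitly that this lemma presumes $S \subset \R^d$ (or at least a setting where the mean is defined), consistent with Lemma~\ref{lem:center_change}; the edge case of an empty or singleton $S$, and the case $\c(S) = 0$, are immediate.
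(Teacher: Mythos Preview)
Your proposal is correct and, after the initial Markov detour, arrives at exactly the paper's argument: for each bad point $x$ the center-shifting lemma gives $d(x,\mu)^2 > 2\c(S)/|S|$, and summing over bad points against the identity $\sum_{x\in S} d(x,\mu)^2 = \c(S)$ forces $|S\setminus G| < |S|/2$. The paper's proof is precisely your refined step (it skips the $1/3$ Markov warm-up), and it omits the degenerate case $\c(S)=0$ that you handle explicitly.
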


\begin{proof}
Let $n = |S|$ and $\mu = \frac{1}{|S|}\sum_{s \in S} s$ denote the mean of $S$. For $x \notin G$, $\c(S, x) > 3\c(S)$.  By Lemma \ref{lem:center_change}, $\c(S, x) = \c(S) + nd(x, \mu)^2$. Therefore $d(x, \mu)^2 > \frac{2\c(S)}{n}$. However, $\sum_{x \in S} d(x, \mu)^2 = \c(S)$ by definition. Thus we have $$\c(S) \geq \sum_{x \notin G} d(x, \mu)^2 > \frac{2\c(S)|S \setminus G|}{n}.$$ This implies $\frac{|S \setminus G|}{n} < \frac{1}{2}$, which means $|G| > \frac{n}{2}$, as desired.
\end{proof}

Lemma~\ref{lem:good} implies that if points from $S$ are independently selected with probability $\Theta\left(\frac{1}{|S|}\right)$, then it is likely that some point $g \in G$ will be selected. We will use this idea to argue that Algorithm \ref{alg:1} selects good points from each cluster in $\o_k(S)$ with high probability.

\begin{thm} \label{thm:approx}
Let $A$ be an offline clustering algorithm with approximation factor $\alpha$. Suppose running Algorithm \ref{alg:1} on $X, k, A$ returns a set of centers $C$. Then with probability $0.9$ over the randomness of Algorithm \ref{alg:1}, $$\c(X,C) \leq 1358\alpha k^3\c_k(X).$$
\end{thm}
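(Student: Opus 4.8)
The plan is to decompose the final cost of $C$ over the optimal $k$-clustering $\o_k(X) = \{C^{*1}, \dots, C^{*k}\}$ and show that, with high probability, $C$ contains a ``good'' surrogate center for each $C^{*i}$ in the sense of Definition~\ref{def:good}. First I would fix an optimal cluster $C^{*i}$ and track, over the arrival times $t$, how the algorithm's merged cluster $c_t(x)$ interacts with the points of $C^{*i}$. The key is that at each time $t$ when a point $x_t \in C^{*i}$ arrives, the merged cluster containing $x_t$ has size $s_t$, and $x_t$ is selected with probability $\frac{20 k \log k}{s_t}$. If $s_t$ were always a good proxy for $|C^{*i}|$ (up to $\poly(k)$ factors), then the number of points of $C^{*i}$ we expect to select is $\Theta(k \log k)$, enough by a coupon-collector/union-bound argument to hit the good set $G$ of $C^{*i}$ — which by Lemma~\ref{lem:good} has size at least $|C^{*i}|/2$ — with probability at least, say, $1 - \frac{1}{20k}$, and then a union bound over the $k$ optimal clusters gives the $0.9$ success probability.

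The technical heart is therefore controlling $s_t$: I would argue (i) an \emph{upper bound} $s_t \le \poly(k)\,|C^{*i}|$ so that selection probabilities are not too small — this should follow because the merged cluster around $x_t$ has bounded cost (Lemma~\ref{lem:approx_upper_bound} and the definition of $\v_t$), and a cluster of bounded cost that contains a point of $C^{*i}$ cannot be enormously larger than $C^{*i}$ without blowing up the cost relative to $\c_k$; and (ii) a \emph{lower bound}, or rather a statement that over all arrival times of points in $C^{*i}$, at least a constant fraction of them see an $s_t$ that is $O(|C^{*i}|)$ — equivalently, not too many points of $C^{*i}$ can sit in tiny merged clusters without forcing $\seq_k$ or the cost to be large. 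Once $s_t = \Theta_{\poly(k)}(|C^{*i}|)$ for a constant fraction of the $|C^{*i}|$ relevant steps, the expected number of selected points from $C^{*i}$ is $\Omega(k \log k)$, and since selections are independent across time steps, a Chernoff bound gives that with probability $\ge 1 - \frac{1}{20k}$ we select at least one point from the portion of $G \cap C^{*i}$ we care about.

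Next I would convert ``$C$ contains a good point $g_i$ for each $C^{*i}$'' into the cost bound. Using $g_i$ as the center for cluster $C^{*i}$ gives $\c(C^{*i}, g_i) \le 3\,\c(C^{*i})$ by Definition~\ref{def:good}, and summing over $i$ yields $\c(X, \{g_1, \dots, g_k\}) \le 3 \sum_i \c(C^{*i}) = 3\,\c_k(X)$. Since $C \supseteq \{g_1, \dots, g_k\}$ and the $k$-means cost only decreases as we add centers, we would get $\c(X, C) \le 3\,\c_k(X)$ — but this ignores the $\poly(k)$ slack lost in the size estimates and the use of the $\alpha$-approximate offline algorithm $A$ rather than the true optimum at each time step, which is where the factors $\alpha$, $k^3$, and the constant $1358$ enter. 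Concretely, the ``good point for $C^{*i}$'' we can guarantee is good relative to a cluster that $A$ produced and that we merged, not relative to $C^{*i}$ itself, so I expect to chain: cost of $C$ $\le$ cost against the final merged clustering of $X$ (bounded via Lemma~\ref{lem:approx_upper_bound} by $101\alpha\,\c_k$) $\times$ the $\poly(k)$ loss from using sampled good points instead of optimal centers within each merged piece, carefully accounting for up to $k$ merged pieces each contributing a factor.

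The main obstacle I anticipate is step (ii) above: ruling out the scenario where a large fraction of a big optimal cluster $C^{*i}$ repeatedly lands in \emph{small} merged clusters, which would make the selection probability $\frac{20k\log k}{s_t}$ too large (harmless for cost but bad for the center count, which is the other theorem) or — more relevantly here — where the merging is ``unlucky'' so that no single sampled point is simultaneously good for all the $C_t^j$ pieces that got merged into the surrogate. Handling this cleanly will likely require arguing about the structure of $A$'s clustering across consecutive time steps and invoking the cost bound of Lemma~\ref{lem:approx_upper_bound} at the final time $t = n$ to certify that the merged clustering of $X$ is an $O(\alpha)$-approximation, then absorbing the per-cluster sampling loss into the $k^3$ factor. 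I would also need to be careful that the events ``$g_i \in C$'' for different $i$ are analyzed via a union bound over independent coin flips at distinct time steps, which is fine since each $x_t$ belongs to exactly one optimal cluster.
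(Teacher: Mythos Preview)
Your plan has a genuine gap at step (i): the upper bound $s_t \leq \poly(k)\,|C_*^i|$ need not hold for any constant fraction of the points of an optimal cluster $C_*^i$, and your justification (``a cluster of bounded cost that contains a point of $C_*^i$ cannot be enormously larger than $C_*^i$ without blowing up the cost'') is simply false. Consider an optimal cluster $C_*^i$ with very few points that sits right next to a huge optimal cluster $C_*^j$. When a point $x_t\in C_*^i$ arrives after most of $C_*^j$, the merged cluster around $x_t$ will swallow $C_*^j$ (the merge costs almost nothing since the two clusters are close), so $s_t$ is of order $|C_*^j|\gg |C_*^i|$. The selection probability $20k\log k/s_t$ is then tiny, and with high probability \emph{no} point of $G^i$ is ever chosen. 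Your coupon-collector argument therefore cannot deliver a good surrogate for every optimal cluster, and the final step ``$C$ contains a good $g_i$ for each $C_*^i$, hence $\c(X,C)\le 3\c_k(X)$'' breaks down. (Incidentally, you flagged the opposite direction---$s_t$ too small---as the main obstacle; that direction is harmless for the approximation factor.)

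The paper confronts exactly this failure mode. It splits the optimal clusters into a set $T$ where your desired bound \emph{does} hold (at least half of $G^i$ has $s(g)\le k|C_*^i|$), and handles the complement by a structural lemma (Lemma~\ref{lem:hard_clusters}): if $C_*^i$ fails the bound, then one can exhibit a strictly larger optimal cluster $C_*^j$ with $|C_*^i|\,d(c_*^i,c_*^j)^2\le O(\alpha)\,\c_k(X)$. Iterating this at most $k$ times lands in some $j\in T$; the accumulated triangle inequality plus Cauchy--Schwarz costs a factor $k^2$, and the good point $g\in G^j$ (which \emph{is} selected w.h.p.\ by your easy argument, now restricted to $T$) serves as an $O(\alpha k^2)$-approximate center for $C_*^i$. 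Summing over the $k$ optimal clusters yields the $\alpha k^3$ in the statement. So the missing idea in your proposal is precisely this ``hard clusters are close to larger clusters'' chaining; without it there is no route from sampled good points to a bound on $\c(X,C)$.
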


Before giving a proof, we first describe our proof strategy and give some helpful definitions and lemmas. 

Let $C_*^1, C_*^2, \dots, C_*^k$ denote the optimal $k$-clustering of $X$, and let $G^1, G^2, \dots, G^k$ denote the sets of good points in each cluster. We also let $|X| = n$.

Our proof strategy is the following. We will first show that there exist clusters $C_*^i$ for which we are likely to choose some center $g \in G^i$. Therefore, for these clusters, we have a $3$-approximation of the optimal cost. For the remaining clusters, we will argue that clusters that are not likely to have a good point chosen must be ``close" to some other cluster. We will then conclude that the good points that we have already selected will serve as an approximation for \textit{all} cluster $C_*^i$, which implies that the total cost is bounded by some constant times $\c_k(X)$.

We begin with the first step. Using the notation from Algorithm \ref{alg:1}, we let $s_t = |\cup_{i=1}^{\v_t} C_t^i|$. For cases in which we don't explicitly note the index $t$, we will let $s(x)$ denote the same thing (i.e. $s(x_t) = s_t$). Here, $s(x)$ can be thought of as the set of points clustered with $x$ (or near $x$).

\begin{lem}\label{lem:easy_clusters}
Fix any $1 \leq i \leq k$. Suppose that for at least half the points $g \in G^i$, $s(g) \leq k|C_*^i|$. Then with probability at least $1 - \frac{1}{k^5}$, Algorithm \ref{alg:1} selects some $g \in G^i$. 
\end{lem}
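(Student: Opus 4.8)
The plan is to turn the hypothesis ``$s(g) \le k|C_*^i|$ for at least half the good points $g \in G^i$'' into a lower bound on the probability that Algorithm~\ref{alg:1} selects one such $g$. First I would fix the cluster index $i$ and let $H \subseteq G^i$ be the set of good points with $s(g) \le k|C_*^i|$; by hypothesis $|H| \ge |G^i|/2 \ge |C_*^i|/4$, using Lemma~\ref{lem:good}. Each point $x_t$ is selected independently with probability $\min\!\left(1, \tfrac{20k\log k}{s_t}\right)$, and for $g \in H$ at its arrival time $t$ we have $s_t = s(g) \le k|C_*^i|$, so $g$ is selected with probability at least $\min\!\left(1, \tfrac{20k\log k}{k|C_*^i|}\right) = \min\!\left(1, \tfrac{20\log k}{|C_*^i|}\right)$.

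Next I would compute the probability that \emph{no} point of $H$ is chosen. Since the selections are independent across time steps, this is at most $\prod_{g \in H}\left(1 - \tfrac{20\log k}{|C_*^i|}\right)$ (treating the trivial case $20\log k \ge |C_*^i|$ separately, where some $g$ is taken with probability $1$ and we are immediately done). Using $1 - x \le e^{-x}$, this product is at most $\exp\!\left(-\tfrac{20\log k}{|C_*^i|}\cdot |H|\right) \le \exp\!\left(-\tfrac{20\log k}{|C_*^i|}\cdot\tfrac{|C_*^i|}{4}\right) = \exp(-5\log k) = k^{-5}$. Hence with probability at least $1 - k^{-5}$ some $g \in H \subseteq G^i$ is selected, which is exactly the claim.

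The one subtlety worth being careful about — and what I'd flag as the main obstacle — is the independence and the ``conditioning'' issue: the value $s_t$ depends only on the prefix $x_1,\dots,x_t$ and the offline algorithm $A$ run on that prefix, not on the coin flips of Algorithm~\ref{alg:1}, so the events ``$x_t$ is selected'' are genuinely mutually independent with the stated (data-determined) probabilities. I would state this explicitly so the product bound over $g \in H$ is justified. A second minor point is the quantitative bookkeeping: I must make sure the factor of $1/2$ in the hypothesis combines with the $1/2$ from Lemma~\ref{lem:good} to give $|H| \ge |C_*^i|/4$, and that $\tfrac{20\log k}{|C_*^i|}\cdot\tfrac{|C_*^i|}{4} = 5\log k$ lands exactly at the $k^{-5}$ in the statement; the constant $20$ in the algorithm is chosen precisely so this works. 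With those two observations in place the proof is a two-line computation.
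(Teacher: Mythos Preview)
Your proposal is correct and follows essentially the same argument as the paper: bound the per-point selection probability from below using $s(g)\le k|C_*^i|$, then bound the product of non-selection probabilities via $1-x\le e^{-x}$ to land at $k^{-5}$. The paper routes the arithmetic through $|G^i|$ (using $|G^i|\ge |C_*^i|/2$ to get probability $\ge \tfrac{10\log k}{|G^i|}$ and $|H|\ge |G^i|/2$) rather than through $|C_*^i|$ directly as you do, but this is cosmetic; your handling of the independence justification and the $\min(1,\cdot)$ edge case is actually more explicit than the paper's.
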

\begin{proof}
For any $g$ with $s(g) \leq k|C_*^i|$, Algorithm \ref{alg:1} selects $g$ with probability at least $\frac{20k\log k}{k|C_*^i|} \geq \frac{10\log k}{|G^i|}$. Since there are at least $\frac{|G^i|}{2}$ such points $g$, and since each point is selected with an independent coin toss, Algorithm \ref{alg:1} selects \textit{no} such points with probability at most $\left(1 - \frac{10\ln k}{|G^i|}\right)^{\frac{|G^i|}{2}}$. By standard manipulations, we have $$\left(1 - \frac{10\ln k}{|G^i|}\right)^{\frac{|G^i|}{2}} \leq e^{-5\ln k} = \frac{1}{k^5},$$ as desired. 
\end{proof}

\begin{figure}
\centering
\includegraphics[scale=0.4]{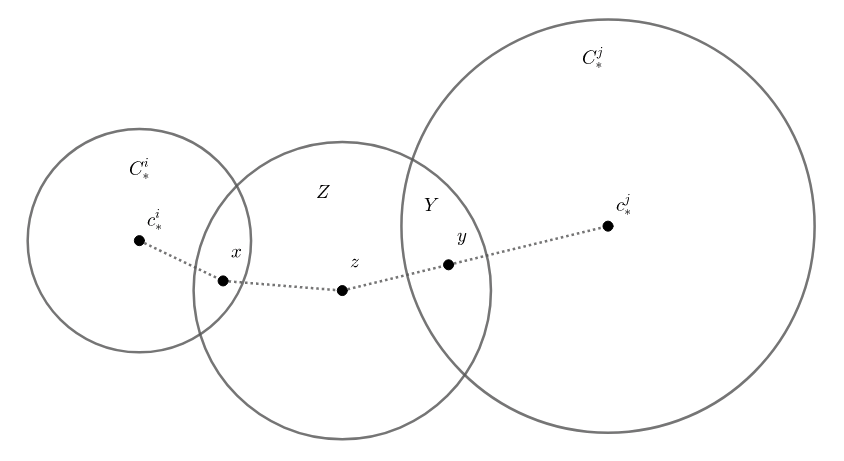}
\begin{tabular}{r@{: }l r@{: }l}
$C_*^i$ & $i$th optimal cluster & $c_*^i$ & center of $C_*^i$\\
$C_*^j$ & $j$th optimal cluster & $c_*^j$ & center of $C_*^j$\\
$Z$& $\cup_{j=1}^{\v_t} C_t^i$ & $z$& $c_t^{\v_t}$ \\
$Y$ & $C_*^j \cap Z$ & $y$ & $\frac{1}{|Y|}\sum_{y' \in Y} y'$
\end{tabular}
\caption{Proof idea of Lemma \ref{lem:hard_clusters}. We derive upper bounds on all the dotted lines.}
\end{figure}

Next we do the second step, in which we handle clusters for which we are not likely to choose some good point $g \in G^i$. 
\begin{lem} \label{lem:hard_clusters}
Fix any $1 \leq i \leq k$. Suppose that for strictly less than half the points $g \in G^i$, $s(g) \leq k|C_*^i|$. Then there exists $j$ with $|C_*^j| > |C_*^i|$ such that $$|C_*^i|d(c_*^i, c_*^j)^2 \leq 1254\alpha\c_k(X),$$ where $c_*^i$ and $c_*^j$ denote the centers of $C_*^i$ and $C_*^j$ respectively. 
\end{lem}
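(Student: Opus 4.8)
The plan is to exploit the hypothesis that at least half of the good points $g \in G^i$ satisfy $s(g) > k|C_*^i|$. Fix such a point $g$, and consider the time step $t$ at which $g = x_t$ arrives. At that moment Algorithm~\ref{alg:1} merged the clusters $C_t^1, \dots, C_t^{\v_t}$ into one large set $Z = \bigcup_{j=1}^{\v_t} C_t^j$ with $|Z| = s_t = s(g) > k|C_*^i|$. Since $|Z| > k|C_*^i|$ and $Z \subseteq \{x_1, \dots, x_t\}$, by pigeonhole $Z$ must intersect some optimal cluster $C_*^j$ (restricted to the first $t$ points) in more than $|C_*^i|$ points; in particular this forces $|C_*^j| > |C_*^i|$, which is the index $j$ we want. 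Set $Y = C_*^j \cap Z$ and let $y$ be the mean of $Y$, following the figure.

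\textbf{Key steps, in order.} First I would bound $d(c_*^i, g)$: since $g \in G^i$ is good, $\c(C_*^i, g) \leq 3\c(C_*^i)$, and by the center-shifting lemma $|C_*^i| d(c_*^i, g)^2 = \c(C_*^i, g) - \c(C_*^i) \leq 2\c(C_*^i) \leq 2\c_k(X)$. Second, I would bound $d(g, z)$ where $z = c_t^{\v_t}$: because $g \in C_t^{j'}$ for some $j' \leq \v_t$, and $z$ is the center the merged clustering assigns, the cost bound from Lemma~\ref{lem:approx_upper_bound} gives control on $d(g, c_t(g))^2 = d(g,z)^2$ in aggregate; more carefully, I want a per-point bound, so I would instead use that $\sum_{j=1}^{\v_t}|C_t^j|d(c_t^j, z)^2 \leq 100\c(\C_t) \leq 100\alpha\c_k(X)$ together with the fact that $g$ lies in some $C_t^{j'}$ whose center is within the allowed radius. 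Third, I would bound $d(z, y)$: since $Y \subseteq Z$ and the merged cluster $Z$ has center $z$ with cost at most $101\alpha\c_k(X)$ overall, $\c(Y, z) \leq \c(Z,z) \leq 101\alpha\c_k(X)$, and by center-shifting $|Y|d(z,y)^2 \leq \c(Y,z) \leq 101\alpha\c_k(X)$, so (using $|Y| \geq |C_*^i|$) $|C_*^i|d(z,y)^2 \leq 101\alpha\c_k(X)$. Fourth, I would bound $d(y, c_*^j)$: $Y \subseteq C_*^j$, so $\c(Y, c_*^j) \leq \c(C_*^j, c_*^j) = \c(C_*^j) \leq \c_k(X)$, hence $|Y|d(y, c_*^j)^2 \leq \c_k(X)$ and again $|C_*^i|d(y,c_*^j)^2 \leq \c_k(X)$. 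Finally, I would assemble these via the triangle-inequality-type bound $d(c_*^i,c_*^j)^2 \leq 4(d(c_*^i,g)^2 + d(g,z)^2 + d(z,y)^2 + d(y,c_*^j)^2)$ (the factor $4$ from Cauchy–Schwarz with four terms), multiply through by $|C_*^i|$, and collect the constants so that the total is at most $1254\alpha\c_k(X)$.

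\textbf{Main obstacle.} The delicate point is Step~3 — obtaining a usable bound on $d(g, z)$, the distance from the good point $g$ to the merged center $z = c_t^{\v_t}$. The cost bound from Lemma~\ref{lem:approx_upper_bound} is an \emph{aggregate} statement over all points, not a pointwise one, so I cannot directly conclude $|C_*^i|d(g,z)^2$ is small. The resolution must go through the structure of the $\v_t$ definition: $g$ belongs to some original cluster $C_t^{j'}$ with $j' \leq \v_t$, so $d(g,z) \leq d(g, c_t^{j'}) + d(c_t^{j'}, z)$, and the second term is exactly one of the terms appearing in $\sum_{j=1}^{\v_t}|C_t^j|d(c_t^j, z)^2 \leq 100\c(\C_t)$. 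I would need $|C_*^i|$ (equivalently a lower bound on $|C_t^{j'}|$, or on $|Y| \geq |C_*^i|$) to convert the weighted sum into the bound I need; tracking which set's cardinality multiplies which squared distance, and making sure every such product stays $O(\alpha\c_k(X))$, is where the bookkeeping is most error-prone and where the constant $1254$ actually gets pinned down.
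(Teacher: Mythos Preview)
Your overall decomposition into the four distances $d(c_*^i,g)$, $d(g,z)$, $d(z,y)$, $d(y,c_*^j)$ matches the paper's, and your arguments for the first, third, and fourth of these are essentially the paper's Claims~1, 3, and~4. The genuine gap is precisely where you flag it: bounding $|C_*^i|d(g,z)^2$. Your proposed route --- write $d(g,z)\le d(g,c_t^{j'})+d(c_t^{j'},z)$ for the $A$-cluster $C_t^{j'}$ containing $g$, and then appeal to $\sum_{j\le v_t}|C_t^j|d(c_t^j,z)^2\le100\c(\C_t)$ --- does not close. The weighted sum only gives $|C_t^{j'}|d(c_t^{j'},z)^2\le100\alpha\c_k(X)$, and there is \emph{no} lower bound relating $|C_t^{j'}|$ to $|C_*^i|$: the $A$-cluster containing $g$ could be a singleton. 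Likewise $d(g,c_t^{j'})^2\le\c(\C_t)$ is only a pointwise cost bound, so multiplying by $|C_*^i|$ blows up. This is not bookkeeping; a new idea is needed.

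The paper's fix is to exploit the freedom in the choice of $g$. Rather than picking an arbitrary good point with $s(g)>k|C_*^i|$, take $t$ to be the \emph{last} time at which $x_t\in G^i$ and $s_t>k|C_*^i|$. Since more than half of $G^i$ has this property, at time $t$ at least $|G^i|/2\ge|C_*^i|/4$ good points of $C_*^i$ are already present; call this set $G$. For each $x'\in G$, the sorting of centers by distance from $x_t$ gives $d(x_t,z)\le d(x_t,z')$ for \emph{any} $z'\in\{c_t^{v_t},\dots,c_t^k\}$, hence $d(x_t,z)-d(x_t,x')\le d(x',z')$ where $z'$ is the center assigned to $x'$ in the merged clustering. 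Squaring and summing over $x'\in G$ turns the right side into an aggregate cost controlled by Lemma~\ref{lem:approx_upper_bound}, while the left side has $|G|\ge|C_*^i|/4$ copies of (essentially) $d(x_t,z)^2$; this is what yields $|C_*^i|d(x_t,z)^2\le526\alpha\c_k(X)$. One minor additional point: assembling the four bounds via $(a+b+c+d)^2\le4(a^2+b^2+c^2+d^2)$ would give roughly $2520\alpha$ rather than $1254\alpha$; the paper uses the sharper $(\sqrt{2}+\sqrt{526}+\sqrt{101}+1)^2\alpha$ to hit the stated constant.
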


\begin{proof}
Let $t$ be the last time such that $x_t$ is a good point from $C_*^i$ (i.e. $x_t \in G^i$), and such that $s_t > k|C_*^i|$. By assumption, for strictly more than $\frac{|G^i|}{2}$ points $x$ in $G^i$, $s(x) > k|C_*^i|$.  This implies that $|\{x_1, x_2, \dots, x_t\} \cap G^i| \geq \frac{|G^i|}{2}.$ The idea is to analyze the algorithm at time $t$.

Let $Z$ denote the combined cluster that $x_t$ is assigned to at time $t$ by Algorithm \ref{alg:1}, that is $Z = \cup_{i=1}^{\v_t} C_t^i$ and $z = c_{\v_t}^t$. In particular, we have that $s_t = |Z|$. Because $|Z| > k|C_*^i|$, there exists $1 \leq j \leq k$ such that $|C_*^j \cap Z| > |C_*^i|$. We claim that for this value of $j$, $C_*^j$ satisfies the desired properties in the lemma.  

Our goal is to find an upper bound on $d(c_*^i, c_*^j)$. To do this, we will find bounds on $d(c_*^i, x_t)$, and  $d(x_t, z),$ and $d(z, c_*^j)$, and then use the triangle inequality. For bounding $d(z, c_*^j)$ in particular, we will consider the intersection of $Z$ and $C_*^j$ which we denote as $Y = C_*^j \cap Z$. We let $y = \frac{1}{|Y|}\sum_{y' \in Y} y'$ be the average of all points in $Y$, and will subsequently bound $d(z, c_*^j)$ by bounding $d(z,y)$ and $d(y, c_j^*)$. 

We will argue this by finding bounds on $d(c_*^i, x), d(x, z), d(z, y),$ and $d(y, c_*^j)$, and then using the triangle inequality. Figure 1 gives a picture that summarizes this argument. We will derive upper bounds on all the dotted lines. 

\paragraph{Claim 1:} $|C_*^i|d(c_*^i, x_t)^2 \leq 2\c_k(X)$.

Since $x_t \in G^i$, we have $\c(C_*^i, x_t) \leq 3\c(C_*^i)$. Therefore, we have $\c(C_*^i) + |C_*^i|d(c_*^i, x_t)^2 \leq 3\c(C_i^*).$ Subtracting $\c(C_i^*)$ from both sides and substituting $\c(C_*^i) \leq \c_k(X)$ gives the result.

\paragraph{Claim 2:} $|C_*^i|d(x_t, z)^2 \leq 526\alpha \c_k(X)$. 

Let $G \subset G^i$ denote the set of all good points present at time $t$ in $C_*^i$. In particular, we let $G = \{x_1, x_2, \dots, x_t\} \cap G^i$. Recall that from the definition of $t$, we have $|G| \geq \frac{|G^i|}{2} \geq \frac{|C_*^i|}{4}$. 

For any $x' \in G$ let $z'$ be its closest center in $\{c_t^{\v_t}, \dots, c_t^k\}$. In particular, we have $z' = \argmin_{c_t^j \in \{c_t^{\v_t},\ldots, c_t^k\}} d(x', c_t^j)$. The key observation is that $x_t$ must be closer to $z$ than it is to $z'$ (from the definition of $z$). Applying the triangle inequality, it follows that $d(x_t, z) - d(x_t, x') \leq d(x', z')$. 

Observe that $x_t$ and $x'$ are both good points, and consequently by the argument in Claim~1, $|C_*^i|d(c_*^i, x_t)^2  \leq 2\c_k(X)$ and $|C_*^i|d(c_*^i, x')^2 \leq 2\c_k(X)$. Thus by applying the triangle inequality again, $|C_*^i|d(x_t, x')^2 \leq 8\c_k(X)$. 

Suppose that $d(x_t, z) \leq d(x_t, x')$. Then this implies $|C_*^i|d(x_t, z)^2 \leq 8\c_k(X)$ which implies the claim. In the other case, we assume $d(x_t, z) \geq d(x_t, x')$, which implies that $(d(x_t, z) - d(x_t, x'))^2 \leq d(x', z')^2.$ The idea now is to sum this equation over all $x' \in G$ and substitute $|C_*^i|d(x_t, x')^2 \leq 8\c_k(X)$ to get that $$|G|\left( d(x_t, z) - \sqrt{\frac{8\c_k(X)}{|C_*^i|}} \right)^2 \leq \sum_{x' \in G} d(x', z')^2.$$

Since $G \subset \{x_1, x_2, \dots, x_t\}$, it follows that $\sum_{x' \in G} d(x', z')^2$ is at most the cost of assigning each $x_i$ to their nearest center in $\{c_t^{\v_t}, c_t^{\v_t + 1}, \dots, c_t^k\}$. This is upper bounded by Lemma \ref{lem:approx_upper_bound}, implying that $\sum_{x' \in G} d(x', z')^2 \leq 101\alpha \c_k(X).$ Substituting this and observing that $|G| \geq \frac{|C_*^i|}{4}$, we have that
\begin{equation*}
\begin{split}
|C_*^i|d(x_t, z)^2 &\leq |C_*^i|\left ( \sqrt{\frac{101\alpha\c_k(X)}{|G|}} +\sqrt{\frac{8\c_k(X)}{|C_*^i|}}\right )^2 \\
&\leq |C_*^i|\left ( \sqrt{\frac{404\alpha\c_k(X)}{|C_*^i|}} +\sqrt{\frac{8\c_k(X)}{|C_*^i|}}\right )^2 \\
&\leq \left (\sqrt{404\alpha \c_k(X)} + \sqrt{8\c_k(X)} \right )^2 \\
&\leq 526\alpha\c_k(X),
\end{split}
\end{equation*}
as desired.

\paragraph{Claim 3:} $|C_*^i|d(z, y)^2 \leq 101\alpha \c_k(X)$.

Observe that the cost incurred by $Y$ at time $t$ by the modified offline clustering (with centers $\{c_t^{v_t}, c_t^{v_t+1}, \dots, c_t^k\}$) is $\c(Y, z) = \c(Y) + |Y|d(z, y)^2$. By Lemma \ref{lem:approx_upper_bound}, this cost is at most $101\alpha \c_k(X)$, and since $|Y| > |C_*^i|$, the result follows.

\paragraph{Claim 4:} $|C_*^i|d(y, c_*^j)^2 \leq \c_k(X)$.

Since $Y \subseteq C_*^j$, the cost $\c(Y, c_*^j)$ at time $t$ by the modified offline clustering is at most $\c(C_*^j)$. Because $|Y| > |C_*^i|$, we have that $|C_*^i|d(y, c_*^j)^2 + \c(Y) \leq \c(C_*^j) \leq \c_k(X),$ which implies the result. 

\paragraph{Putting it all together.} Armed with all $4$ of our claims, we can prove the lemma using the triangle inequality, 
\begin{equation*}
\begin{split}
|C_*^i|d(c_*^i, c_*^j)^2 &\leq |C_*^i|(d(c_*^i, x_t) + d(x_t, z) + d(z, y) + d(y, c_*^j))^2 \\
&\leq |C_*^i|\left ( \sqrt{\frac{2\c_k(X)}{|C_*^i|}}+\sqrt{\frac{526\alpha\c_k(X)}{|C_*^i|}}+\sqrt{\frac{101\alpha\c_k(X)}{|C_*^i|}}+\sqrt{\frac{\c_k(X)}{|C_*^i|}}\right)^2 \\
&\leq (\sqrt{2\alpha} + \sqrt{526\alpha} + \sqrt{101\alpha} + \sqrt{\alpha})^2\c_k(X) \\
&\leq 1254\alpha\c_k(X).
\end{split}
\end{equation*}
\end{proof}

We now complete the proof of Theorem \ref{thm:approx}. 

\begin{proof}
(Theorem \ref{thm:approx}) Let $T$ denote the set of all $i$ such that for at least half the points $g \in G^i$, $s(g) \leq k|C_*^i|$. Although $T$ is a random set, its randomness only stems from the randomness in the approximation algorithm $A$. Crucially, the set $T$ is independent from the results of the random choices that ultimately determine which elements of $X$ we select in $C$. Thus, in this proof we will treat $T$ and $s(g)$ as fixed entities, and evaluate all probabilities over the randomness from the random choices.

Using a union bound along with  Lemma \ref{lem:easy_clusters}, we see that with probability at least $1 - \frac{k}{k^5} \geq 0.9$, we will choose some $g \in G^i$ for all $i \in T$. Recall that $C$ denotes the output of Algorithm \ref{alg:1}. Therefore, with probability $0.9$, for all $i \in T$, we have $$\c(C_*^i, C) \leq 3\c(C_*^i).$$

Next, select any $1 \leq i \leq k$ with $i \notin T$. Although applying lemma \ref{lem:hard_clusters} may not result in $j \in T$, it will result in $j$ such that $|C_*^j| > |C_*^i|$. Therefore, applying this lemma at most $k$ times will continually result in increasingly large sets $|C_*^j|$. Since such a $C_*^j$ is guaranteed to exist, this must terminate in some $j \in T$.  

Therefore, using the triangle inequality in conjunction with Cauchy Schwarz we see that there exists $j \in T$ such that $$|C_i^*|d(c_*^i, c_*^j)^2 \leq 1254\alpha k^2\c_k(X).$$ Let $g$ be a good point with $g \in G^{j}$. By the same argument given in Claim~1, $|C_*^i|d(g, c_*^j)^2 \leq |C_*^j|d(g, c_*^j)^2 \leq 2\c_k(X)$. Therefore, by the triangle inequality $$|C_*^i|d(c_*^i, g)^2 \leq |C_*^i|\left(d(c_*^i, c_*^j) + d(c_*^j, g)\right)^2 \leq 1357\alpha k^2\c_k(X).$$ By Lemma \ref{lem:center_change}, this implies $$\c(C_*^i, g) = |C_i^*|d(c_*^i, g)^2 + \c(C_*^i) \leq1358\alpha k^2\c_k(X).$$ As shown earlier, Algorithm \ref{alg:1} selects some $g \in G^{j}$ with probability at least $0.9$ for all $j \in T$. Therefore, with probability $0.9$, Algorithm \ref{alg:1} outputs $C$ such that $$\c(X, C) \leq \sum_{i = 1}^k \c(C_*^i, C) \leq \sum_{i = 1}^k 1358\alpha k^2\c_k(X) = 1358\alpha k^3\c_k(X)$$ as desired. 

\end{proof}

\subsection{Center complexity analysis}\label{sec:center_complexity}

We now bound the expected number of centers outputted by Algorithm \ref{alg:1}.

\begin{thm}\label{thm:center_complexity}
Let $A$ be an offline $\alpha$-approximation algorithm. If Algorithm \ref{alg:1} has output $C$, then $$\E[|C|] \leq 160k\log k \seq_{k}(X)(\log n+1).$$
\end{thm}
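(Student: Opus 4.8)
The plan is to separate the two sources of randomness, reduce to a deterministic statement about the sizes $s_t$, and then bound that statement combinatorially using $\seq_k(X)$.

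\textbf{Step 1: reduction to a deterministic sum.} Each $s_t$ is a deterministic function of the offline clustering $\C_t = A(\{x_1,\dots,x_t\},k)$ (via $\v_t$), hence depends only on the internal randomness of $A$ and not on the independent coins Algorithm~\ref{alg:1} uses to decide membership in $C$. Since $x_t$ is added to $C$ with probability at most $\frac{20k\log k}{s_t}$, linearity of expectation gives $\E[|C|] \le 20k\log k \cdot \E\big[\sum_{t=1}^n \tfrac{1}{s_t}\big]$, so it is enough to prove that for every realization of $A$,
\[
\sum_{t=1}^n \frac{1}{s_t} \;\le\; 8\,\seq_k(X)\,(\log n + 1).
\]
Because $i = 1$ is always feasible in the definition of $\v_t$, we have $1 \le s_t \le t \le n$. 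I would split $\{1,\dots,n\}$ into the dyadic buckets $B_r = \{t : 2^r \le s_t < 2^{r+1}\}$, $r = 0,1,\dots,\lfloor \log_2 n\rfloor$ (at most $\log n + 1$ of them), and note $\sum_{t \in B_r} \tfrac1{s_t} \le |B_r|/2^r$. Thus the whole theorem follows once we establish the bucket bound $|B_r| \le 8\,\seq_k(X)\cdot 2^r$ for every $r$, as summing over the $\le \log n + 1$ buckets then yields the displayed inequality and hence $\E[|C|] \le 160\,k\log k\,\seq_k(X)(\log n + 1)$.

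\textbf{Step 2: bounding a bucket by a $(2,k)$-sequence.} Fix $r$ and list $B_r = \{t_1 < t_2 < \dots < t_L\}$. The idea is to extract from $x_{t_1},\dots,x_{t_L}$ a $(2,k)$-sequence (Definition~\ref{dfn:alpha-seq} with $\alpha = 2$) of length $\Omega(L/2^r)$; since every such sequence in $X$ has length at most $\seq_k(X)$ by the definition of the Online Center measure, this forces $L = O(2^r\,\seq_k(X))$, with the constant $8$ after bookkeeping. I would do the extraction with a single greedy sweep over $t_1,\dots,t_L$: maintaining the (always valid) $(2,k)$-sequence $Y$ built so far, append $x_{t_i}$ to $Y$ whenever $\min_{y \in Y} d(y,x_{t_i}) > 2\diam_{k-1}(Y)$, and otherwise ``skip'' $t_i$. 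Then $|Y| \le \seq_k(X)$ automatically, so it remains to bound the total number of skips, i.e.\ to show that during any one ``generation'' of the sweep (a maximal stretch over which $Y$ stays fixed) at most $O(2^r)$ indices are skipped.

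\textbf{Step 3: why a generation is short --- the main obstacle.} This is the crux, and the step I expect to be genuinely difficult. The structural input is that $t \in B_r$ means $s_t < 2^{r+1}$, so the merged cluster $Z_t = \bigcup_{j \le \v_t} C_t^j$ that Algorithm~\ref{alg:1} forms around $x_t$ has fewer than $2^{r+1}$ points; combining the merge-cost bound from Lemma~\ref{lem:approx_upper_bound} with the center-shifting lemma (Lemma~\ref{lem:center_change}) lets one control $\diam(Z_t)$ and $d(x_t, Z_t)$ in terms of $\sqrt{\c_k(\{x_1,\dots,x_t\})}$, so $x_t$ effectively sits in a bounded region that contains fewer than $2^{r+1}$ of the previously seen points. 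The indices skipped during a fixed generation are precisely those whose points land within $2\diam_{k-1}(Y)$ of the fixed set $Y$; the delicate part is to argue that such points --- each arriving inside its own $(<2^{r+1})$-point merged cluster, and all of them clusterable together with $Y$ into $k-1$ clusters whose diameter is comparable to $\diam_{k-1}(Y)$ --- cannot number more than $O(2^r)$ before $Y$ is forced to grow. Making this quantitative requires carefully reconciling the three scales at play: the merge threshold $100\,\c(\C_t)$ that defines $\v_t$, the quantity $2\diam_{k-1}(Y)$ that defines membership in the $(2,k)$-sequence, and the optimal $k$-means cost. Once this is done, Steps 1 and 2 are routine, and we obtain $\E[|C|] \le 160\,k\log k\,\seq_k(X)(\log n + 1)$.
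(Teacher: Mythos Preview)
Your Step~1 and the dyadic bucketing in Step~2 match the paper exactly: the paper also reduces to $\sum_t 1/s_t \le 8\,\seq_k(X)(\log n+1)$ and also buckets (in its case by out-degree, which it shows is at most $2s_t-1$, so effectively the same buckets). The divergence, and the gap, is entirely in Step~3.

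Your greedy extraction has no mechanism to bound the number of skips in a generation. ``Skipped'' only means each individual $x_{t_i}$ lies within $2\diam_{k-1}(Y)$ of some point of $Y$; nothing prevents arbitrarily many bucket points from piling up near $Y$. You try to invoke the fact that each $x_{t_i}$ sits inside a merged cluster $Z_{t_i}$ of size $<2^{r+1}$, but those clusters are computed at \emph{different} times from \emph{different} prefixes, need not contain one another or the previously skipped points, and their diameters are controlled by $\c_k(\{x_1,\dots,x_{t_i}\})$, a quantity with no a priori relation to $\diam_{k-1}(Y)$. In particular you never use the \emph{maximality} of $\v_t$, which is the only leverage the algorithm gives you beyond ``the merged cluster has $s_t$ points''.

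The paper's proof supplies exactly the missing object. For each $t$ it defines, in addition to $P_t=\bigcup_{j\le \v_t} C_t^j$, the set $Q_t$ of points outside $P_t$ whose distance to their own center $c_t(x)$ is at least $\tfrac15 r_t$, where $r_t=d(x_t,c_t^{\v_t+1})$. Maximality of $\v_t$ forces $|Q_t|\le s_t-1$: if $|Q_t|\ge s_t$ then the cost already satisfies $\c(\C_t)\ge s_t r_t^2/25$, while merging one more center costs $\sum_{j\le \v_t}|C_t^j|\,d(c_t^j,c_t^{\v_t+1})^2\le 4s_t r_t^2\le 100\,\c(\C_t)$, contradicting maximality. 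One then builds a directed graph $G$ on $\{1,\dots,n\}$ with edges from $t$ to every index in $P_t\cup Q_t$; the out-degree of $t$ is at most $2s_t-1$. The point of $Q_t$ is that any independent set $i_1<\dots<i_r$ in $G$ is automatically a $(2,k)$-sequence: at time $t=i_s$ the earlier $x_{i_j}$'s lie outside $P_t\cup Q_t$, hence each is within $\tfrac15 r_t$ of one of the $k-\v_t\le k-1$ remaining centers, giving $\diam_{k-1}(\{x_{i_1},\dots,x_{i_{s-1}}\})<\tfrac25 r_t$ while $d(x_t,x_{i_j})\ge \tfrac45 r_t$. Tur\'an's theorem applied bucket-by-bucket then yields an independent set of size $\ge \tfrac{1}{8(\log n+1)}\sum_t 1/s_t$, which is exactly your target inequality. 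So the ingredient you are missing is the auxiliary set $Q_t$ and the observation that maximality of $\v_t$ bounds its size; once you have that, the extraction is via Tur\'an rather than a greedy sweep, and the ``skips per generation'' difficulty disappears.
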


Before proving Theorem \ref{thm:center_complexity}, we introduce some useful definitions and lemmas. 

As before, we let $X = \{x_1, x_2, \dots, x_n\}$ denote our input, and let $s_t$ be as defined in Algorithm \ref{alg:1}. Observe that $x_i$ is selected as a center by our algorithm with probability $\frac{20k\log k}{s_i}$. Therefore, the expected number of centers satisfies $$\E[|C|] = \sum_{i=1}^n \frac{20k\log k}{s_i}.$$ Our goal will be to show an upper bound on this expression. To do so, we will need the following constructions. For any $1 \leq t \leq n$, define the following.

\begin{enumerate}
	\item Let $P_t \subseteq \{x_1, x_2, \dots, x_t\}$ denote $P_t = C_t^1 \cup C_t^2 \dots \cup C_t^{\v_t}$. This represents all elements that were clustered with $x_t$ by Algorithm 1 at time $t$.
	\item For any $x_t$, we let $r_t = r(x_t) = d(x, c_t^{\v_t + 1}).$ Thus $r(x_t)$ represents the distance from $x_t$ to the closest center at time $t$ that is not used for clustering $P_t$. 
	\item Let $Q_t$ denote the set of all $x \in \{x_1, x_2, \dots x_t\} \setminus P_t$ such that $d(x, c_t(x)) \geq \frac{1}{5}r(x_t)$, where $c_t(x)$ denote the cluster center $c_j^t$ that $x$ is assigned to at time $t$ by offline clustering algorithm $A$.
\end{enumerate}
 
Finally, we will use these sets $P_t, Q_t$ for $1 \leq t \leq n$ to construct one directed graph $G$ as follows. Let $G$ have vertex set $\{1, 2, 3, \dots, n\}$ and every vertex $j$ have edges pointed to all $i$ for which $x_i \in P_j$ or $x_i \in Q_j$. In particular, the set of edges in $G$ denoted $E(G)$ satisfies $$E(G) = \{(j, i): x_i \in (P_j \cup Q_j)\}.$$
Our strategy will be show the following:
\begin{enumerate}
	\item Any independent set $I \subset G$ forms a $(2, k)$-sequence. 
	\item $G$ has an independent set of size at least $\frac{1}{8(\log n+1)}\sum_{i=1}^n \frac{1}{s_i}$. 
\end{enumerate}

These two observations will imply that $$\E[|C|] = \sum_{i=1}^n \frac{20k\log k}{s_i} \leq 160k\log k \seq_{k}(X)(\log n+1), $$ which is the desired result. We now verify these observations with the following lemmas. 

\begin{lem}\label{lem:seq_construct}
If $i_1 \leq i_2, \dots \leq i_r$ is an independent set in $G$, then $x_{i_1}, x_{i_2}, \dots, x_{i_r}$ form a $(2,k)$-sequence. 
\end{lem}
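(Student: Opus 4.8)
The plan is to verify directly that the subsequence $x_{i_1}, \dots, x_{i_r}$ satisfies the defining inequality of a $(2,k)$-sequence: for every $1 < a \le r$, I must show that $\min_{b < a} d(x_{i_b}, x_{i_a}) > 2\,\diam_{k-1}(\{x_{i_1}, \dots, x_{i_{a-1}}\})$. Fix such an $a$, and set $t = i_a$, so we are examining the state of Algorithm \ref{alg:1} at time $t$. Since the $i_b$ for $b < a$ are all smaller than $t$, the points $x_{i_1}, \dots, x_{i_{a-1}}$ are all among $\{x_1, \dots, x_t\}$, so the offline clustering $\C_t$ partitions them. Because $\{i_1, \dots, i_r\}$ is independent in $G$, none of these earlier indices lie in $P_t$ or $Q_t$: that is, for each $b < a$, the point $x_{i_b}$ is clustered by $A$ at time $t$ with one of the centers $c_t^{\v_t+1}, \dots, c_t^k$ (it is not in $P_t$), and moreover $d(x_{i_b}, c_t(x_{i_b})) < \tfrac15 r_t$ (it is not in $Q_t$), where $r_t = d(x_t, c_t^{\v_t+1})$.

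The key step is to build an explicit $(k-1)$-clustering of $\{x_{i_1}, \dots, x_{i_{a-1}}\}$ of small diameter, which will bound $\diam_{k-1}$ of that set from above. I would use the $k - \v_t$ centers $c_t^{\v_t+1}, \dots, c_t^k$ to induce a partition of the earlier points (grouping $x_{i_b}$ with whichever of these centers $A$ assigned it to). This uses at most $k - \v_t \le k-1$ groups, since $\v_t \ge 1$. Any two earlier points $x_{i_b}, x_{i_{b'}}$ assigned to the same center $c_t^j$ satisfy, by the triangle inequality and the fact that they avoid $Q_t$,
\[
d(x_{i_b}, x_{i_{b'}}) \le d(x_{i_b}, c_t^j) + d(c_t^j, x_{i_{b'}}) < \tfrac15 r_t + \tfrac15 r_t = \tfrac25 r_t.
\]
Hence $\diam_{k-1}(\{x_{i_1}, \dots, x_{i_{a-1}}\}) \le \tfrac25 r_t$, so $2\,\diam_{k-1}(\{x_{i_1}, \dots, x_{i_{a-1}}\}) \le \tfrac45 r_t$.

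It remains to lower-bound $\min_{b<a} d(x_{i_b}, x_t)$ by something strictly larger than $\tfrac45 r_t$. Fix $b < a$ and let $c_t^j = c_t(x_{i_b})$ be the center (with $j > \v_t$) that $A$ assigns $x_{i_b}$ to at time $t$. Since $x_{i_b} \notin Q_t$ we have $d(x_{i_b}, c_t^j) < \tfrac15 r_t$, and since $c_t^j \in \{c_t^{\v_t+1}, \dots, c_t^k\}$ while $c_t^{\v_t+1}$ is the closest of these to $x_t$, we have $d(x_t, c_t^j) \ge d(x_t, c_t^{\v_t+1}) = r_t$. The triangle inequality then gives
\[
d(x_{i_b}, x_t) \ge d(x_t, c_t^j) - d(x_{i_b}, c_t^j) > r_t - \tfrac15 r_t = \tfrac45 r_t \ge 2\,\diam_{k-1}(\{x_{i_1}, \dots, x_{i_{a-1}}\}).
\]
(I should double-check the definition of $r_t$ in item 2 — it is written $d(x, c_t^{\v_t+1})$ with what appears to be a typo for $x_t$; the intended meaning is clearly $r_t = d(x_t, c_t^{\v_t+1})$, and I will use that.) Since $b < a$ was arbitrary, this establishes the required inequality, and as $a$ was arbitrary the subsequence is a $(2,k)$-sequence.

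The main obstacle is purely bookkeeping: keeping straight which centers ($c_t^1,\dots,c_t^{\v_t}$ versus $c_t^{\v_t+1},\dots,c_t^k$) are "near" versus "far" from $x_t$, and making sure the partition I build for the earlier points uses at most $k-1$ clusters and that every earlier point genuinely lands in it. There is a subtle point to address: if $\v_t = k$ then there are no "far" centers at all, but then $P_t = \{x_1,\dots,x_t\}$, so every earlier index would be in $P_t$ and hence adjacent to $t$ in $G$ — contradicting independence unless $a = 1$, so this case is vacuous. I would note this explicitly. No deep idea is needed beyond the triangle inequality; the value of the lemma is in having set up $P_t$ and $Q_t$ with exactly the right thresholds so the constants $\tfrac15, \tfrac25, \tfrac45$ line up against the factor $2$.
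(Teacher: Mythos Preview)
Your proposal is correct and follows essentially the same argument as the paper: you use the clusters $C_t^{\v_t+1},\dots,C_t^k$ to partition the earlier points into at most $k-1$ groups of diameter $<\tfrac25 r_t$, and the triangle inequality to lower-bound each $d(x_{i_b},x_t)$ by $\tfrac45 r_t$. Your explicit treatment of the edge case $\v_t=k$ and of the typo in the definition of $r_t$ are nice additions that the paper leaves implicit.
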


\begin{proof}
Fix any $1 < s \leq r$, and for convenience let $t = i_s$ and let $I_s = \{x_{i_j}: 1 \leq j < s\}$ denote the set of points before $x_{i_s}$. The key observation is that $C_t^{\v_t + 1}, C_t^{\v_t + 2}, \dots, C_t^k$ partition $I_s$ into at most $k-1$ sets. For any $x_{i_j} \in I_s$, $x_{i_j} \notin P_t \cup Q_t$. Therefore, $d(x_{i_j}, c(x_{i_j})) < \frac{1}{5}r(x_t)$. 

This means we have a partitioning of $I_s$ into $k-1$ sets each of which has diameter strictly less than $\frac{2}{5}r(x_t)$. Meanwhile the distance from $x_t$ to the closest point in $I_s$ is at least $r(x_t) - \frac{1}{5}r(x_t) = \frac{4}{5}r(x_t).$ This is strictly more than double the $(k-1)$-diameter of $I_s$. Since $s$ was arbitrary, we see that the precise condition for a $(2,k)$-sequence holds as desired.
\end{proof}

\begin{lem}\label{lem:degree}
For any $t$, $|P_t \cup Q_t| \leq 2s_t-1$. Thus the vertex $t$ has out-degree at most $2s_t-1$ in $G$. 
\end{lem}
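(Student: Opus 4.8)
The plan is to bound $|P_t|$ and $|Q_t|$ separately and show each is at most $s_t$, with at least one point (namely $x_t$ itself) counted in the overlap so that the union has size at most $2s_t - 1$. The first bound is immediate: by definition $P_t = C_t^1 \cup \dots \cup C_t^{\v_t}$, so $|P_t| = \sum_{j=1}^{\v_t} |C_t^j| = s_t$ exactly. The work is therefore entirely in bounding $|Q_t|$, the set of points $x \in \{x_1,\dots,x_t\} \setminus P_t$ whose distance to their assigned center $c_t(x)$ (one of $c_t^{\v_t+1}, \dots, c_t^k$) is at least $\tfrac15 r(x_t)$.

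First I would control the cost contributed by the points of $Q_t$. Each $x \in Q_t$ pays at least $d(x, c_t(x))^2 \geq \tfrac{1}{25} r(x_t)^2$ in the modified clustering of $\{x_1,\dots,x_t\}$ used by Algorithm~\ref{alg:1}. Summing over $Q_t$ and invoking Lemma~\ref{lem:approx_upper_bound} (the modified clustering has total cost at most $101\alpha\,\c_k(\{x_1,\dots,x_t\})$, but more usefully at most $101\alpha\,\c(\C_t)$, and even more directly we only need that it is at most $101\,\c(\C_t)$ after the merge, or simply that the modified cost is bounded by the unmerged offline cost times a constant), we get a bound of the form $|Q_t| \cdot \tfrac{1}{25} r(x_t)^2 \leq (\text{const}) \cdot \c(\C_t)$. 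The next step is to relate $\c(\C_t)$ back to $s_t$ and $r(x_t)$. This is where the definition of $\v_t$ does the crucial work: $\v_t$ is the \emph{largest} index for which $\sum_{j=1}^{\v_t} |C_t^j| d(c_t^j, c_t^{\v_t})^2 \leq 100\,\c(\C_t)$, so adding cluster $\v_t+1$ would break the inequality, i.e. $\sum_{j=1}^{\v_t+1} |C_t^j| d(c_t^j, c_t^{\v_t+1})^2 > 100\,\c(\C_t)$ (assuming $\v_t < k$; if $\v_t = k$ then $Q_t = \emptyset$ and there is nothing to prove). Each term $d(c_t^j, c_t^{\v_t+1})$ is at least... hmm, I need a lower bound here, which suggests the intended route actually goes through $r(x_t) = d(x_t, c_t^{\v_t+1})$ directly: one compares $r(x_t)$ to the cost $\c(\C_t)$ using that $x_t \in P_t$ sits in one of the first $\v_t$ clusters and the merge failed at step $\v_t+1$, giving $s_t \cdot (\text{something involving } r(x_t)^2) \gtrsim \c(\C_t)$, equivalently $\c(\C_t) \lesssim s_t\, r(x_t)^2$ up to absolute constants.

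Putting these together yields $|Q_t| \lesssim s_t$ with an absolute constant; the constants $100$ and $\tfrac15$ in the algorithm and in the definition of $Q_t$ are presumably tuned so that this constant is exactly $1$, giving $|Q_t| \leq s_t$. Since $x_t \in P_t$ and $x_t \notin Q_t$, the two sets overlap in a way (or rather, $Q_t$ omits a point of $P_t$) that shaves one off, so $|P_t \cup Q_t| = |P_t| + |Q_t| \leq s_t + (s_t - 1) = 2s_t - 1$ — more precisely, I would argue $|Q_t| \le s_t - 1$ directly, or note $P_t \cap Q_t = \emptyset$ with $|Q_t| \le s_t$ but then separately account for the slack. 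The out-degree statement is then immediate since the out-neighbors of $t$ in $G$ are exactly $\{i : x_i \in P_t \cup Q_t\}$.

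The main obstacle I anticipate is pinning down the precise chain of inequalities relating $r(x_t)$, $\c(\C_t)$, and $s_t$ so that the absolute constant comes out at exactly $1$ (rather than, say, $25$ or $100$): this requires using the maximality of $\v_t$ in exactly the right form, and being careful about whether the relevant cost is the offline cost $\c(\C_t)$, the merged cost, or $\c_k$ of the prefix. The edge cases $\v_t = k$ (where $Q_t$ is empty) and the degenerate situation $r(x_t) = 0$ (where $Q_t$ would be all of $\{x_1,\dots,x_t\}\setminus P_t$ — but then $r(x_t)=0$ forces $c_t^{\v_t+1} = x_t$, contradicting $x_t \notin P_t$, or forces a collapse that I would handle by noting $s_t$ is then large) also need a sentence each.
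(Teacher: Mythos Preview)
Your overall approach matches the paper's: bound $|P_t|$ and $|Q_t|$ separately, with $|P_t|=s_t$ immediate and $|Q_t|\le s_t-1$ coming from the maximality of $\v_t$. But you have two concrete gaps that the paper resolves cleanly, and once you see them the constants fall out exactly.

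First, drop the detour through Lemma~\ref{lem:approx_upper_bound}. For $x\in Q_t$ we have $x\notin P_t$, so $c_t(x)=c_t^j$ for some $j>\v_t$, which is the \emph{original} offline center. Hence each $x\in Q_t$ contributes $d(x,c_t(x))^2\ge r_t^2/25$ directly to $L_t:=\c(\C_t)$, giving $|Q_t|\,r_t^2/25\le L_t$ with no factor of $\alpha$ or $101$.

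Second, the step where you write ``hmm, I need a lower bound here'' is the wrong direction. What you actually need is an \emph{upper} bound on each $d(c_t^j,c_t^{\v_t+1})$ for $j\le \v_t$. This comes from the ordering of centers by distance to $x_t$: for every $j\le \v_t+1$ we have $d(x_t,c_t^j)\le d(x_t,c_t^{\v_t+1})=r_t$, so by the triangle inequality $d(c_t^j,c_t^{\v_t+1})\le 2r_t$. Since the $(\v_t{+}1)$-st term in the merge sum vanishes, maximality of $\v_t$ gives
\[
100\,L_t<\sum_{j=1}^{\v_t}|C_t^j|\,d(c_t^j,c_t^{\v_t+1})^2\le 4r_t^2\sum_{j=1}^{\v_t}|C_t^j|=4s_t r_t^2,
\]
i.e.\ $L_t< s_t r_t^2/25$. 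Combining with the first inequality yields $|Q_t|<s_t$, hence $|Q_t|\le s_t-1$. The paper runs this as a contradiction (assume $|Q_t|\ge s_t$, derive $4s_tr_t^2\le 100L_t$, contradict maximality), but the content is identical. Your worry about edge cases is also settled: if $\v_t=k$ then $Q_t=\emptyset$ trivially, and the argument above never needs $r_t>0$ separately since $|Q_t|<s_t$ follows with strict inequality whenever $\v_t<k$.
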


\begin{proof}
Recall that $s_t$ was defined as $s_t = |C_t^1 \cup C_t^2 \cup \dots \cup C_t^{\v_t}| = |P_t|$. Therefore, it suffices to show that $|Q_t| \leq s_t-1$. 

Assume towards a contradiction that $|Q_t| \geq s_t$. Let $L_t$ denote the cost of the original clustering chosen by Algorithm $A$ at time step $t$. That is $L_t = \sum_{i=1}^k \c(C_t^i).$ By definition, each element in $Q_t$ incurs a cost of at least $\frac{1}{25}r_t^2$. Therefore, we have that $L_t \geq \frac{s_tr_t^2}{25}.$

Next, we bound the cost of assigning $C_t^1, C_t^2, \dots, C_t^{\v_t}, C_t^{v_t + 1}$ to $c_t^{v_t + 1}$. To do so, observe that $d(x_t, c_t^i) \leq r_t$ for any $i \leq \v_t + 1$. This is because the centers $c_t^i$ are arranged in increasing order by their distance from $x_t$. Therefore, by the triangle inequality, for any $1 \leq i \leq v_t$, $d(c_t^i, c_t^{v_t+1}) \leq 2r_t$. This implies that 
\begin{equation*}
\begin{split}
\sum_{i=1}^{\v_t}|C_t^i|d(c_t^i, c_t^{v_t+1})^2  &\leq  \sum_{i=1}^{\v_t}|C_t^i|4r_t^2 \\
&\leq 4s_tr_t^2 \\
&\leq 100L_t,
\end{split}
\end{equation*}
with the last inequality holding because $L_t \geq \frac{s_tr_t^2}{25}.$ However, this contradicts the maximality of $\v_t$, which implies that our assumption was false and $|Q_t| < s_t$ as desired.
\end{proof}

Next, we will find a lower bound on the size of the largest independent set in $G$. Our main tool for doing so is Turan's theorem which we review in the following theorem. For completeness, we also include a proof.

\begin{thm}[Turan's theorem] Let $H$ be an undirected graph with average degree $\Delta$. Then there exists an independent set in $H$ consisting of at least $\frac{|H|}{\Delta + 1}$ vertices, where $|H|$ denotes the number of vertices in $H$.
\end{thm}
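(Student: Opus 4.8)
The plan is to give the classical probabilistic argument (the random-permutation / Caro--Wei argument). First I would sample a uniformly random permutation $\pi$ of the vertex set $V(H)$ and define $I_\pi = \{v \in V(H) : \pi(v) < \pi(u) \text{ for every neighbor } u \text{ of } v\}$, the set of vertices that come before all of their neighbors. The first step is to observe that $I_\pi$ is always an independent set: if adjacent vertices $u,v$ both lay in $I_\pi$, then $u$ would have to precede $v$ and $v$ would have to precede $u$ in $\pi$, a contradiction.

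Next I would compute $\E[|I_\pi|]$. Fix a vertex $v$ of degree $d(v)$. The event $v \in I_\pi$ depends only on the relative order of the $d(v)+1$ vertices consisting of $v$ and its neighbors, and it occurs exactly when $v$ is the first of these in $\pi$; hence it has probability $\frac{1}{d(v)+1}$. By linearity of expectation, $\E[|I_\pi|] = \sum_{v \in V(H)} \frac{1}{d(v)+1}$.

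The last step is to lower bound this sum. Writing $n = |H|$ and using $\sum_{v} d(v) = n\Delta$ (the defining property of the average degree $\Delta$), the convexity of $x \mapsto \frac{1}{x+1}$ together with Jensen's inequality gives $\frac{1}{n}\sum_{v} \frac{1}{d(v)+1} \geq \frac{1}{\frac1n\sum_v d(v) + 1} = \frac{1}{\Delta+1}$, so $\E[|I_\pi|] \geq \frac{n}{\Delta+1}$. Since some outcome must meet the expectation, there is a permutation $\pi$ with $|I_\pi| \geq \frac{n}{\Delta+1}$, and that $I_\pi$ is the required independent set.

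There is essentially no serious obstacle here; the only points requiring a moment of care are applying Jensen in the correct direction (convexity yields a \emph{lower} bound on the average of $\frac{1}{d(v)+1}$) and the degenerate case $\Delta = 0$, where $H$ is edgeless and $V(H)$ itself is independent of size $n = \frac{n}{\Delta+1}$. If one prefers to avoid probabilistic language, an equivalent greedy induction also works -- repeatedly delete a minimum-degree vertex along with its neighbors and recurse -- but I would present the permutation argument, which is shortest.
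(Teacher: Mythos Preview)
Your proof is correct and follows essentially the same random-permutation (Caro--Wei) argument as the paper, including the final application of Jensen's inequality to the convex function $x \mapsto \frac{1}{x+1}$. Your formulation is in fact slightly cleaner: by defining $I_\pi$ directly as the set of vertices preceding all their neighbors you obtain probability exactly $\frac{1}{d(v)+1}$, whereas the paper's greedy selection (``select $v$ if no neighbor has already been selected'') only guarantees this probability as a lower bound, so the paper's word ``precisely'' should really read ``at least.''
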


\begin{proof}
Let $|H| = m$ and let our vertices be labeled $1, 2, 3, \dots m$. Let vertex $i$ have degree $\Delta_i$. Take a random ordering of the vertices in $H$. Proceed through the vertices in this order and select a vertex if and only if none of its neighbors has already been selected. At the end of this process, we are clearly left with an independent set $I$. The probability that vertex $i$ is included in $I$ is precisely $\frac{1}{\Delta_i + 1}$, since $i$ will be chosen if and only if it appears before its $\Delta_i$ neighbors. Thus, by linearity of expectation, we have that $\E[|I|] = \sum_{i=1}^m \frac{1}{\Delta_i + 1}$. Thus there exists an independent set $I^*$ with size at least $\sum_{i=1}^m \frac{1}{\Delta_i + 1}$. 

To finish the proof, let $f(x) = \frac{1}{x}$, thus $|I^*| \geq \sum_{i=1}^m f(\Delta_i + 1)$. The key observation is that $f$ is a convex function on the interval $(0, \infty)$, and thus by Jensen's inequality, we have that $$|I^*| \geq mf\left(\frac{\sum_{i=1}^m \Delta_i + 1}{m}\right) = mf(\Delta + 1) = \frac{m}{\Delta+1},$$ as desired. 
\end{proof}

\begin{lem}\label{lem:turan}
$G$ has an independent set of size at least $\frac{1}{8(\log n+1)}\sum_{i=1}^n \frac{1}{s_i}$. 
\end{lem}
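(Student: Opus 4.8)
The plan is to combine Lemma~\ref{lem:degree} with a dyadic bucketing of the vertices and Tur\'an's theorem. First I would split $\{1,2,\dots,n\}$ into buckets according to the dyadic scale of $s_t$: for $\ell = 0,1,\dots,\lfloor\log n\rfloor$ let $B_\ell = \{t : 2^\ell \le s_t < 2^{\ell+1}\}$. Since each $s_t$ is an integer in $[1,n]$ (note $\v_t\ge 1$ always, because the index $i=1$ satisfies the defining inequality of $\v_t$ with left-hand side $0$, so $s_t\ge|C_t^1|\ge 1$), there are at most $\log n + 1$ non-empty buckets (taking $\log$ to base $2$), and $\sum_{i=1}^n \frac1{s_i} = \sum_\ell \sum_{t\in B_\ell}\frac1{s_t}$.

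Next I would work inside a single bucket $B_\ell$ and look at the subgraph of $G$ induced on $B_\ell$. By Lemma~\ref{lem:degree}, every vertex $t\in B_\ell$ has out-degree at most $2s_t-1 \le 2^{\ell+2}-3$ in $G$, hence also in the induced subgraph, so the number of directed edges with both endpoints in $B_\ell$ is at most $|B_\ell|(2^{\ell+2}-3)$. Passing to the underlying undirected graph can only decrease the edge count, so the sum of its degrees is at most $2|B_\ell|(2^{\ell+2}-3)$ and its average degree $\Delta_\ell$ satisfies $\Delta_\ell \le 2^{\ell+3}-6 \le 2^{\ell+3}-1$. Applying Tur\'an's theorem to this undirected graph produces an independent set $I_\ell\subseteq B_\ell$ with $|I_\ell| \ge \frac{|B_\ell|}{\Delta_\ell+1}\ge \frac{|B_\ell|}{2^{\ell+3}}$; crucially an independent set in the undirected graph has no edge of $G$ between its vertices in either direction, so $I_\ell$ is an independent set of $G$ itself. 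Finally, since $\frac1{s_t}\le \frac1{2^\ell}$ for every $t\in B_\ell$, we get $\sum_{t\in B_\ell}\frac1{s_t}\le \frac{|B_\ell|}{2^\ell} = 8\cdot\frac{|B_\ell|}{2^{\ell+3}} \le 8|I_\ell|$.

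To conclude, I would pick the bucket $\ell^\star$ maximizing $\sum_{t\in B_\ell}\frac1{s_t}$. Because there are at most $\log n+1$ buckets, this bucket captures at least a $\frac1{\log n+1}$ fraction of the whole sum, so $|I_{\ell^\star}| \ge \frac18\sum_{t\in B_{\ell^\star}}\frac1{s_t} \ge \frac1{8(\log n+1)}\sum_{i=1}^n \frac1{s_i}$, giving the claimed independent set. I do not expect a genuinely hard step here: the only care needed is the factor-of-two bookkeeping between directed out-degrees and undirected average degree (so that Tur\'an applies with the right constant) and the verification that $s_t\ge 1$; the overall argument is a standard "bucket-and-Tur\'an" packing bound.
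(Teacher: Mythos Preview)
Your proposal is correct and follows essentially the same ``bucket-and-Tur\'an'' argument as the paper: dyadically partition the vertices, apply Tur\'an's theorem inside each bucket, and then use that one of the $\log n+1$ buckets carries at least a $1/(\log n+1)$ fraction of the total weight. The only cosmetic difference is that you bucket by the value of $s_t$ whereas the paper buckets by the out-degree $d_t$ and converts via $d_t\le 2s_t-1$ at the end; the bookkeeping and resulting constant~$8$ are the same.
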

\begin{proof}
Let $d_m$ denote the outdegree of vertex $m$. Partition the vertices of $G$, $\{1, 2, 3, \dots, n\}$ into sets $S_0, S_1, S_2, \dots, S_{\log n}$ such that $$S_i = \{j: 2^{i-1} \leq d_j < 2^i\}.$$ We let $S_0$ be the set of all vertices with degree $0$. Let $G_i$ denote the subgraph of $G$ induced by $S_i$. The main idea is to use Turan's theorem on each graph $G_i$, and then use the fact that there are $\log n$ graphs $G_i$ to consider. 

Observe that $G_i$ has at most $|S_i|(2^i - 1)$ edges. By considering the undirected version of $G$ (simply drop the orientation of each edge), it follows that the average degree is at most $\frac{|S_i|(2^{i+1} - 2)}{|S_i|} = 2^{i+1} -2.$ Therefore, by Turan's theorem, $G_i$ has an independent set $I_i$ of size at least $\frac{|S_i|}{2^{i+1}}$. As a result, we have that $$\sum_{j \in S_i} \frac{1}{d_j+1} \leq \frac{|S_i|}{2^{i-1}} \leq 4|I_i|.$$ Let $I$ denote the largest independent set of $G$. It follows that $I \geq |I_i|$ for all $i$. Summing the above inequality over all $i$, we see that $$\sum_{j = 1}^n \frac{1}{d_j + 1} \leq 4\sum_{i=0}^{\log n} |I_i| \leq 4|I|(\log n+1).$$ By Lemma \ref{lem:degree}, $d_m \leq 2s_m - 1$. Upon substituting this, the desired result follows. 
\end{proof}

We are now ready to prove Theorem \ref{thm:center_complexity}.
\begin{proof}
(Theorem \ref{thm:center_complexity}) Let $I$ be the largest independent set of $G$. By Lemma \ref{lem:seq_construct}, we have that $|I| \leq \seq_{k}(X)$. By Lemma \ref{lem:turan}, we have that $$\frac{1}{8(\log n+1)}\sum_{i=1}^n \frac{1}{s_i} \leq |I| \leq \seq_{k}(X).$$ Multiplying by $\log n$, we see that $$\E[|C|] = \sum_{i=1}^n \frac{20k\log k}{s_i} \leq 160k\log k \seq_{k}(X)(\log n+1),$$ as desired. 
\end{proof}
\section{Lower Bounds}\label{sec:lower_bounds}

In this section, we prove lower bounds on the number of centers any online algorithm with approximation factor $\alpha$ must take. We first express these bounds in terms of $(\frac{1}{2}\sqrt{n\alpha}, k)$-sequences, and then convert them to bounds involving $\seq_k(X)$ by utilizing Lemma \ref{lem_diameter}. The basic idea is that in an $(\frac{1}{2}\sqrt{n\alpha}, k)$ sequence, the points spread out at an extremely quickly rate. Therefore, each subsequent point must be selected for otherwise it incurs are large cost. 

\begin{proof}[of Lemma~\ref{lem:cahnge_factor_in_seq}]
For any $1 < m \leq n$, let $d_m$ denote the distance from $x_m$ to the closest point preceding it, that is, $d_m = \min_{1 \leq i \leq m-1}d(x_i, x_m).$ 

First, we claim that for any $m > k$, there exists $1 \leq i \leq k-1$ such that $d_m > \alpha d_{m-i}.$ To see this, observe that by the definition of a $( \alpha,k)$-sequence, it is possible to partition $\{x_1, x_2, \dots, x_{m-1}\}$ into $k-1$ sets so that each has diameter strictly less than $d_m/\alpha$. By the pigeonhole principle, at least one of $x_{m-1}, x_{m-2}, \dots, x_{m-k+1}$ must be partitioned into a set with more than $1$ element. Let $m-i$ be this value. Then, it follows that $d_m/\alpha > d_{m-i}$.

Next, by repeatedly applying this claim, starting with $x_n$, we can construct a sequence of points  $x_{i_1}, x_{i_2}, \dots, x_{i_r}$ such that $r \geq \frac{n}{k}$ and $d_{i_j} > \alpha d_{i_{j-1}}$ for all $1 < j \leq r$. Note that this sequence is constructed in reverse order by starting with $i_r = n$, and then setting $i_{r-1} = i_r - i$ where $i$ is the value found by the argument above with $1 \leq i \leq k-1.$

Finally, let $s = \lceil \log_\alpha \beta \rceil$. It follows that for all $j$, $d_{i_j} > \alpha^{s-1} d_{i_{j - s + 1}} > \frac{\beta}{\alpha}d_{i_{j - s + 1}}.$ Using the definition of an $(\alpha, k)$-sequence, we have that for any $j$,
\begin{equation*}
\begin{split}
d_{i_j} &> \frac{\beta}{\alpha} d_{i_{j-s+1}} \\
&> \frac{\beta}{\alpha}\alpha\diam_{k-1}(\{x_1, x_2, \dots, x_{i_{j-s}}\}) \\
&\geq \beta\diam_{k-1}(\{x_{i_{j-s}}, x_{i_{j-2s}}, x_{i_{j-3s}}, \dots \}).
\end{split}
\end{equation*}

By repeatedly setting $j$ to be multiples of $s$, we see that $x_{i_s}, x_{i_{2s}}, \dots, x_{i_{\lfloor r/s\rfloor s}},$ is a $(\beta, k)$-sequence. Thus, all the remains is to bound its length. Substituting $r \geq n/k$, we have that 
$\lfloor \frac{r}{s} \rfloor \geq \big\lfloor \frac{n}{k\lceil \log_\alpha \beta \rceil}\big\rfloor,$ which implies the result. 

\end{proof}

\begin{lem}\label{lem:seq_requirement}
Let $x_1, x_2, \dots, x_n$ be an $(\frac{1}{2}\sqrt{n\alpha}, k)$-sequence. Then the expected number of centers taken by \textbf{any} streaming algorithm that guarantees approximation factor $\alpha$ is at least $0.9n$. 
\end{lem}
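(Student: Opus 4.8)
The plan is a point-by-point argument: I will show that for \emph{every} index \(t\in[n]\) the algorithm selects \(x_t\) at the moment it arrives with probability at least \(0.9\), and then conclude by linearity of expectation that the expected number of chosen centers is at least \(0.9n\). Fix \(t\), write \(d_t=\min_{1\le i<t}d(x_i,x_t)\), and set \(\beta=\tfrac12\sqrt{n\alpha}\). The key structural point is that, since \(n\) is unknown to the algorithm and decisions are irrevocable, the prefix \(X_t=\{x_1,\dots,x_t\}\), presented in this order, is itself a legitimate input on which the algorithm must be an \(\alpha\)-approximation, and its output on \(X_t\) is exactly \(C\cap X_t\), where \(C\) is its output on the full stream. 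If \(x_t\notin C\), then \(x_t\notin C\cap X_t\), and since every center chosen before time \(t\) lies in \(X_{t-1}\), the cost of the algorithm on input \(X_t\) is at least \(d(x_t,\,C\cap X_t)^2\ge d_t^2\).

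The second ingredient is an upper bound on \(\c_k(X_t)\). By the definition of an \((\beta,k)\)-sequence, \(\diam_{k-1}(X_{t-1})<d_t/\beta\), so \(X_{t-1}\) can be partitioned into at most \(k-1\) parts, each of diameter less than \(d_t/\beta\). Taking these parts together with the singleton \(\{x_t\}\) and bounding the cost of a part \(P\) by \(\tfrac12(|P|-1)\diam(P)^2\) (which follows from the center-shifting identity, Lemma~\ref{lem:center_change}), one gets an upper bound on \(\c_k(X_t)\) that, after summing over the at most \(t\le n\) points of the prefix, still satisfies \(\alpha\,\c_k(X_t)<d_t^2\); it is precisely the requirement that this hold for \emph{every} prefix length up to \(n\) that forces the spread factor to scale like \(\sqrt{n\alpha}\) rather than be a constant, and this is what calibrates the constant appearing in the statement.

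Combining the two ingredients: on the input \(X_t\), the event \(\{x_t\notin C\}\) is contained in the event that the algorithm's cost exceeds \(\alpha\,\c_k(X_t)\), which by the definition of an \(\alpha\)-approximation algorithm has probability at most \(0.1\). Hence \(\Pr[x_t\notin C]\le 0.1\), i.e.\ \(\Pr[x_t\in C]\ge 0.9\). (For \(t\le k\) this is immediate and stronger: \(\c_k(X_t)=0\), so the algorithm is forced to take all of \(x_1,\dots,x_t\) with probability \(1\).) Summing over \(t=1,\dots,n\),
\[
\E[|C|]=\sum_{t=1}^{n}\Pr[x_t\in C]\ \ge\ 0.9n ,
\]
which is the claimed bound.

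The main obstacle is the middle step: getting the upper bound on \(\c_k(X_t)\) tight enough — in its dependence on \(t,n,k,\alpha\) — that the ``skip cost'' \(d_t^2\) strictly beats \(\alpha\,\c_k(X_t)\) uniformly over all \(t\le n\), and not merely for, say, \(t\le n/2\). Everything else is routine: the reduction to truncated inputs uses only irrevocability and the fact that \(n\) is not announced in advance, and the final step is just linearity of expectation over the per-point failure probabilities furnished by the definition of approximation.
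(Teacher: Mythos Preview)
Your proposal is essentially identical to the paper's proof: for each index $t$ you lower-bound the skip cost by $d_t^2$, upper-bound $\c_k(X_t)$ by clustering $X_{t-1}$ into $k-1$ small-diameter parts plus the singleton $\{x_t\}$, deduce $\Pr[x_t\in C]\ge 0.9$ from the approximation guarantee (using that $n$ is unknown so each prefix is a valid input), and sum via linearity of expectation. The only cosmetic difference is that the paper bounds the cost of each part via a per-point radius estimate rather than your $\tfrac12(|P|-1)\diam(P)^2$; both versions leave the same constant-calibration step (which you correctly flag) only loosely verified.
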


\begin{proof}
Consider any $1 < m \leq n$. Let $d_m$ denote the distance from $x_m$ to the closest point preceding it; that is, $d_m = \min_{1 \leq i \leq m-1}d(x_i, x_m)$. The key observation is that if the algorithm doesn't choose $x_m$, then it must pay a cost of at least $d_m^2$ at time $m$ (since $x_m$ must be clustered with some cluster center in $\{x_1, x_2, \dots, x_{m-1}\}$). Because $n$ is not known in advance, any streaming Algorithm must ensure that the cost at all times $t$ is relatively low. We will show that $d_m^2$ is large enough so that failing to pick it will incur a cost at time $m$ that is too high.

Consider the following clustering of $\{x_1, x_2, \dots, x_m\}$. Let $x_m$ be its own cluster, and then cluster $\{x_1, x_2, \dots, x_{m-1}\}$ into $k-1$ clusters each with diameter strictly less than $\frac{2d_m}{\sqrt{n\alpha}}.$ This is possible because of the definition of an $(\frac{1}{2}\sqrt{n\alpha}, k)$-sequence. It follows that each point is clustered with radius at most $\frac{d_m}{\sqrt{n\alpha}}$ in this clustering, and thus the entire cost is strictly less than $\frac{d_m^2}{\alpha}$. As a result, it follows that the total cost is small, that is, $\c_k(\{x_1, x_2, \dots, x_m\}) < \frac{d_m^2}{\alpha}$. 

Thus if an algorithm has approximation factor of $\alpha$, it must select $x_m$ with probability at least $0.9$ for all $1 \leq m \leq n$, since otherwise it incurs cost at least $d_m^2 > \alpha \c_k(\{x_1, x_2, \dots, x_m\}).$ While it is possible that centers chosen in the future may incur a smaller cost for $x_m$, because $n$ is unknown we can simply have the streaming stop at this point. The result follows.
\end{proof}

By combining Lemmas \ref{lem_diameter} and \ref{lem:seq_requirement}, we get a lower bound for the number of centers an online algorithm must select given a worst case ordering of a dataset $X$. 

\begin{thm}\label{thm:lower_bound}
Let $X$ be an arbitrary set of points. There exists an ordering of $X$ such that a streaming algorithm with  approximation factor $\alpha$ must select at least $0.9\lfloor\frac{\seq_k(X)}{k\lceil \log_2 \frac{1}{2}\sqrt{n\alpha}\rceil}\rfloor$ points in expectation.
\end{thm}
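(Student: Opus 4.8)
The plan is to obtain the bound by chaining Lemma~\ref{lem_diameter} with Lemma~\ref{lem:seq_requirement}. By the definition of the Online Center measure (which fixes the scalar at $2$), the set $X$ contains a $(2,k)$-sequence $x_1,\dots,x_N$ of length $N=\seq_k(X)$. First I would feed this sequence into Lemma~\ref{lem_diameter} with source parameter $2$ and target parameter $\beta=\frac{1}{2}\sqrt{n\alpha}$, where $\alpha$ is the approximation factor and $n=|X|$; when $\frac{1}{2}\sqrt{n\alpha}\le 2$ the claimed bound is vacuous, so we may assume $\frac{1}{2}\sqrt{n\alpha}>2$ and hence $\beta>2$ as the lemma requires. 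This yields a sub-sequence $y_1,\dots,y_m$ of $X$ that is a $(\frac{1}{2}\sqrt{n\alpha},k)$-sequence, and, reading the length off the \emph{proof} of Lemma~\ref{lem_diameter} (which gives the sharper $\big\lfloor N/(k\lceil\log_2\beta\rceil)\big\rfloor$ rather than the looser form stated), $m\ge\big\lfloor\frac{\seq_k(X)}{k\lceil\log_2\frac{1}{2}\sqrt{n\alpha}\rceil}\big\rfloor$.

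Next I would fix the adversarial ordering of $X$: present $y_1,y_2,\dots,y_m$ first, in that order, followed by the remaining $n-m$ points in any order. To apply Lemma~\ref{lem:seq_requirement} to the prefix, two observations are needed. First, since $m\le n$ we have $\frac{1}{2}\sqrt{m\alpha}\le\frac{1}{2}\sqrt{n\alpha}$, and a $(\beta,k)$-sequence is automatically a $(\beta',k)$-sequence for every $\beta'\le\beta$ (the defining inequality only becomes easier to satisfy), so $y_1,\dots,y_m$ is in particular a $(\frac{1}{2}\sqrt{m\alpha},k)$-sequence. Second, in the no-substitution model the stream length is not revealed in advance, so while an algorithm processes $y_1,\dots,y_m$ it must already guarantee approximation factor $\alpha$ on each prefix $y_1,\dots,y_{m'}$ --- exactly the hypothesis driving the proof of Lemma~\ref{lem:seq_requirement}; the fact that additional points may follow does not relax this constraint. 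Hence Lemma~\ref{lem:seq_requirement}, applied with its length variable instantiated as $m$, shows that any streaming algorithm with approximation factor $\alpha$ selects in expectation at least $0.9m$ of the points $y_1,\dots,y_m$. Because decisions are irrevocable, the final output $C$ contains all of these, so $\E[|C|]\ge 0.9m\ge 0.9\big\lfloor\frac{\seq_k(X)}{k\lceil\log_2\frac{1}{2}\sqrt{n\alpha}\rceil}\big\rfloor$, which is the theorem.

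There is no deep obstacle here --- the statement is essentially a composition of two already-established lemmas --- so the work is almost entirely bookkeeping. The points that genuinely need care are: the two distinct roles played by ``$n$'' (the true dataset size inside the factor $\frac{1}{2}\sqrt{n\alpha}$ versus the length variable in Lemma~\ref{lem:seq_requirement}) and the monotonicity remark that lets a $(\frac{1}{2}\sqrt{n\alpha},k)$-sequence serve as the required $(\frac{1}{2}\sqrt{m\alpha},k)$-sequence; the degenerate small-$n$ regime where $\frac{1}{2}\sqrt{n\alpha}\le 2$ and the bound is trivially true; and the justification that embedding the hard prefix inside a longer, arbitrarily ordered stream does not weaken the per-prefix cost requirement on which Lemma~\ref{lem:seq_requirement} rests.
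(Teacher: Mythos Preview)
Your proposal is correct and follows exactly the approach the paper intends: the paper itself provides no standalone proof for Theorem~\ref{thm:lower_bound}, merely stating that it follows ``by combining Lemmas~\ref{lem_diameter} and~\ref{lem:seq_requirement},'' which is precisely the composition you carry out. Your handling of the bookkeeping --- in particular the distinction between $n=|X|$ and the prefix length $m$, the monotonicity of $(\beta,k)$-sequences in $\beta$, and the observation that the sharper bound $\big\lfloor N/(k\lceil\log_2\beta\rceil)\big\rfloor$ comes from the \emph{proof} of Lemma~\ref{lem_diameter} rather than its statement --- is more explicit than anything the paper writes down.
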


\section{Bounds on \texorpdfstring{$\seq_k(X)$}{Lg} for mixture distributions}

In this section, we consider the case where the input data $X$ is generated from some distribution $\D$ over $\R^d$. While each point $x_i \in X$ is independently sampled from $\D$, we make no assumptions about the order in which these points are presented to our algorithm. 
Furthermore, in the $k$-clustering setting, it is natural to assume that $\D$ is a mixture of $k$ distributions $\D^1, \D^2, \dots, \D^k$ over $\R^d$ such that each $\D^i$ corresponds to an ``intrinsic" cluster of $\D$. In particular, we will find bounds on $\seq_k(X)$ under the assumption that each $\D^i$ is a relatively well behaved distribution.

We begin by defining the \textit{aspect ratio} of a set $X$, which will subsequently be used to bound $\seq_k(X)$.

\begin{defn}
The aspect ratio of a set of points $S = \{s_1, s_2, \dots, s_n\}$, denoted $\asp(S)$, is the ratio between the distance of the farthest two points of $S$ and the closest two points of $S$. That is, $$\asp(S) = \frac{\max_{i \neq j}d(s_i, s_j)}{\min_{i \neq j}d(s_i, s_j)}.$$
\end{defn}

Let $X$ be a dataset drawn from $\D$. As shown in Lemma \ref{lem_diameter}, any $(2,k)$ subsequence have points with distances that grow exponentially. Furthermore, by the pigeonhole principle, at least $1/k$ of the elements in any $(2,k)$ sequence must come from some distribution $\D^i$. It follows that we can relate $\seq_k(X)$ to the aspect ratio $\asp(X^i)$, where $X^i$ denotes the points in $X$ drawn from $\D^i$. 

\begin{lem}\label{lem:asp_ratio}
For any set of points $X$ and any $k \geq 1$, if $\seq_k(X) \geq 1$, then for some $1 \leq i \leq k$, $\asp(X^i) \geq 2^{\seq_k(X)/k^2}.$
\end{lem}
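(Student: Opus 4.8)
The plan is to take a maximum-length Online-Center sequence in $X$ and use the pigeonhole principle to extract a large monochromatic sub-collection, then argue that the exponential growth of pairwise distances forces a large aspect ratio within a single $X^i$. Concretely, let $m = \seq_k(X)$ and let $x_1, x_2, \dots, x_m$ be an $(2,k)$-sequence realizing this length. Each $x_j$ is drawn from exactly one of the mixture components $\D^1, \dots, \D^k$; color $x_j$ by the index of the component it came from. By the pigeonhole principle there is some color $i$ such that at least $m/k$ of the points $x_j$ belong to $X^i$. Let $x_{j_1}, x_{j_2}, \dots, x_{j_r}$ be this monochromatic subsequence, with $j_1 < j_2 < \dots < j_r$ and $r \geq m/k$.

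Next I would show that consecutive points of this monochromatic subsequence have pairwise distances that grow by a factor of at least $2$. Fix $s$ with $1 < s \leq r$ and write $t = j_s$. Since the whole sequence is a $(2,k)$-sequence, $\min_{1 \leq i' \leq t-1} d(x_{i'}, x_t) > 2\diam_{k-1}(\{x_1, \dots, x_{t-1}\})$. In particular, because all of $x_{j_1}, \dots, x_{j_{s-1}}$ lie in $\{x_1, \dots, x_{t-1}\}$, the diameter of this earlier monochromatic block is at most $\diam_{k-1}(\{x_1,\dots,x_{t-1}\})$ (a $(k-1)$-clustering of a superset can only have larger best diameter than the diameter of a subset when the subset happens to lie in one cluster — more carefully, $\diam_{k-1}(S) \le \diam(S')$ is the wrong direction, so here I instead use: every pair among $x_{j_1}, \dots, x_{j_{s-1}}$ has distance at most $2\diam_{k-1}(\{x_1,\dots,x_{t-1}\})$ by the definition of best-$(k-1)$-diameter applied to one pair, hence $\diam(\{x_{j_1},\dots,x_{j_{s-1}}\}) \le 2\diam_{k-1}(\{x_1,\dots,x_{t-1}\})$). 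Combining, for any $p < s$ we get $d(x_{j_p}, x_{j_s}) > 2\diam_{k-1}(\{x_1,\dots,x_{t-1}\}) \ge \diam(\{x_{j_1},\dots,x_{j_{s-1}}\}) \ge d(x_{j_p}, x_{j_{p'}})$ for all $p' < s$; so in particular the nearest-neighbor distance of $x_{j_s}$ to the earlier monochromatic points exceeds the whole diameter of those earlier points. Iterating this from $s = 2$ up to $s = r$ yields $\max_{p<q} d(x_{j_p}, x_{j_q}) \geq 2^{r-2} \cdot d(x_{j_1}, x_{j_2})$, and the minimum pairwise distance among $\{x_{j_1},\dots,x_{j_r}\}$ is at least, say, $d(x_{j_1}, x_{j_2})$ up to the chain of inequalities above; a clean version gives $\asp(X^i) \ge \asp(\{x_{j_1},\dots,x_{j_r}\}) \ge 2^{\,r-O(1)} \ge 2^{m/k^2}$ after absorbing the additive constant (one can also squeeze the factor $1/k$ vs $1/k^2$ discrepancy by being slightly more generous, which is where the stated exponent $m/k^2$ rather than $m/k$ comes from).

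The main obstacle I anticipate is getting the direction of the diameter comparisons exactly right: $\diam_\ell$ is a \emph{best}-diameter (a min over clusterings), so it is bounded \emph{below} by trivial quantities and bounded \emph{above} only with care. The safe route is to never compare $\diam_{k-1}$ of a big set with $\diam$ of a small set directly, but instead to note that for any two points $a, b$ in a set $S$, $d(a,b) \le 2\diam_{k-1}(S)$ (since in the optimal $(k-1)$-clustering either $a,b$ are in the same cluster, giving $d(a,b)\le\diam_{k-1}(S)$, or in different clusters, and then one still needs an extra step — actually this requires $a,b$ to be witnessed through the cluster structure; the cleanest fact is simply $\diam_{k-1}(S) \ge \frac{1}{2}$ of the $k$-th largest "gap", which is more than I need). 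To avoid this subtlety entirely, I would instead run the telescoping argument directly on the original $(\,2,k)$-sequence defining quantities $d_t = \min_{i<t} d(x_i, x_t)$ exactly as in the proof of Lemma~\ref{lem:cahnge_factor_in_seq}, extract the monochromatic subsequence there, and read off that within $X^i$ the ratio of the largest to smallest such $d_t$ over the monochromatic indices is at least $2^{\,\text{(length)}}$, which after the pigeonhole loss of a factor $k$ (and a second factor $k$ from the same pigeonhole-style accounting used to pass from $\seq_k$ to a $(2,k)$-subsequence on the component, cf.\ Lemma~\ref{lem:cahnge_factor_in_seq}) yields the exponent $\seq_k(X)/k^2$. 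That reduction to the already-proven telescoping lemma is, I expect, the slickest way to finish and sidesteps the diameter-direction pitfall.
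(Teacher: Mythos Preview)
Your overall plan---pigeonhole to get a monochromatic sub-collection, then use the exponential telescoping of the nearest-predecessor distances $d_t$---is the same plan the paper follows, and you even correctly diagnose why your first attempt via diameter comparisons breaks down. But your ``safe route'' still has a real gap.

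You propose to run the telescoping on the \emph{original} $(2,k)$-sequence, i.e.\ to look at the quantities $d_t = \min_{i<t} d(x_i,x_t)$ for $t$ ranging over the monochromatic indices $j_1<\dots<j_r$, and conclude that $\max_t d_t/\min_t d_t$ is large. The problem is that each $d_{j_s}$ is the distance from $x_{j_s}\in X^i$ to its nearest predecessor \emph{in the full sequence}, which need not lie in $X^i$. So a large ratio among these $d_t$'s says nothing a priori about $\asp(X^i)$; the small distance witnessing $\min_t d_t$ may be between a point of $X^i$ and a point of some other $X^{i'}$. Likewise, the telescoping step $d_m>2d_{m-i}$ for some $1\le i\le k-1$ lands you at an index $m-i$ that is typically not monochromatic, so ``restricting to monochromatic indices'' afterwards does not interact cleanly with the recursion.

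The missing observation---and this is exactly what the paper uses---is that \emph{any subsequence of a $(2,k)$-sequence is again a $(2,k)$-sequence}. This follows immediately from the monotonicity $\diam_{k-1}(S)\le\diam_{k-1}(T)$ for $S\subseteq T$ (restrict the optimal $(k-1)$-clustering of $T$ to $S$). Once you know that, take the monochromatic subsequence of length $\ge m/k$ and apply the telescoping argument from Lemma~\ref{lem:cahnge_factor_in_seq} \emph{to it}: now every $d_t$ is a distance between two points of $X^i$, and each telescoping step costs at most $k-1$ indices, giving $\ge (m/k)/k = m/k^2$ doublings. That is where the second factor of $k$ genuinely comes from, and it directly yields $\asp(X^i)\ge 2^{\seq_k(X)/k^2}$.
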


\begin{proof}
Let $\{x_1, x_2, \dots, x_{\seq_k(X)/k}\} \subset X$ be the largest $(2, k)$ sequence in $X$ for which all $x_j$ are drawn from some $\D^i$. Such a sequence must exist by the definition of $\seq_k(X)$ and by a simple pigeonhole argument. 

For any $1 < m \leq \seq_k(X)$, let $d_m = \min_{1 \leq i \leq m-1}d(x_i, x_m)$. By the argument given in the proof of Lemma \ref{lem_diameter}, for any $m > k$  there exists $1 \leq i \leq k-1$ such that $d_m > 2d_{m-i}$. Thus applying this argument $\seq_k(X)/k$ times, we see that $d_{\seq_k(X)} > 2^{\seq_k(X)/k^2}d_{i}$ for some $2 \leq i \leq k$. The result follows from the definition of the aspect ratio.
\end{proof}

We will now show that for a broad class of distributions $\D$ over $\R^d$, namely those for which each $\D^i$ has finite variance and bounded probability density, that $\asp(X \sim \D^n) = O( n^3)$. This in turn will imply that $\seq_k(X) \leq O(k^2\log n)$. 

\begin{thm}\label{thm:asp_ratio_bound}
Let $\D$ be a distribution over $\R^d$ that is a mixture of distributions $\D^1, \D^2, \dots, \D^k$. Suppose there exist constants $D, \rho$ such that the following hold:
\begin{enumerate}
	\item For each $1 \leq i \leq k$, the expected squared distance from $x \sim \D^i$ to its mean is bounded. In particular, $$\E_{x \sim \D^i}[d(x, \E_{x' \sim \D}[x'])^2] \leq D,$$ for some $0 < D < \infty$.
	\item $\D$ (the entire mixture distribution) has probability density at most $\rho$ for some $0 < \rho < \infty$.
\end{enumerate}
Then for $X \sim \D^n$, with probability at least $1 - \frac{2}{n}$, $\seq_k(X) = O(k^2\log n)$. 
\end{thm}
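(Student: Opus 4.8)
The plan is to reduce the whole statement to a polynomial bound on the aspect ratio of $X$. By Lemma~\ref{lem:asp_ratio}, if $\seq_k(X) \ge 1$ then some component $i$ has $2^{\seq_k(X)/k^2} \le \asp(X^i)$, and since the aspect ratio of a subset is never larger than that of the whole set (its maximum pairwise distance can only shrink and its minimum can only grow), $\asp(X^i) \le \asp(X)$. Hence, once we show $\asp(X) \le n^{c}$ for some absolute constant $c$ with probability at least $1 - \frac{2}{n}$, taking $\log_2$ of $2^{\seq_k(X)/k^2} \le n^c$ gives $\seq_k(X) \le c k^2 \log_2 n = O(k^2 \log n)$. (If $\seq_k(X) < 2k$ the conclusion already holds, so we may freely assume the component $X^i$ supplied by the lemma contains at least two points, making $\asp(X^i)$ well defined.) So it remains to produce two high-probability estimates: an upper bound on $\diam(X)$ and a lower bound on the minimum pairwise distance in $X$.

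First I would bound $\diam(X)$. Write $\mu = \E_{x' \sim \D}[x']$. The first hypothesis says that $\E[d(x,\mu)^2] \le D$ for a point $x$ from any component, hence $\E[\sum_{i=1}^n d(x_i,\mu)^2] \le nD$, and Markov's inequality gives that with probability at least $1 - \frac{1}{n}$ we have $\sum_{i=1}^n d(x_i,\mu)^2 \le n^2 D$. On this event every $x_i$ lies within $n\sqrt{D}$ of $\mu$, so $\diam(X) \le 2 n \sqrt{D}$ by the triangle inequality.

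Next I would lower bound the closest-pair distance using the second hypothesis. For two independent draws $x, y \sim \D$, conditioning on $x$ and integrating the density (which is at most $\rho$) over the ball $B(x,\epsilon)$ shows that the probability $d(x,y) \le \epsilon$ is at most $\rho V_d \epsilon^d$, where $V_d \epsilon^d$ is the volume of a radius-$\epsilon$ ball in $\R^d$. Taking a union bound over the fewer than $n^2/2$ pairs of points and choosing $\epsilon = \bigl(\frac{2}{\rho V_d n^3}\bigr)^{1/d} = \Theta(n^{-3/d})$ shows that with probability at least $1 - \frac{1}{n}$ no two points of $X$ are within distance $\epsilon$. Intersecting this event with the one from the previous paragraph (a union bound over two events each of failure probability $\frac{1}{n}$) yields, with probability at least $1 - \frac{2}{n}$, that $\asp(X) \le 2 n \sqrt{D} / \Theta(n^{-3/d}) = n^{O(1)}$ (in fact $O(n^3)$ once $d \ge 2$), and feeding this into the reduction above completes the proof.

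The step I expect to be the main obstacle is the closest-pair lower bound, since it is the only place the bounded-density assumption is used and one must track how the estimate degrades with the dimension — the usable radius $\epsilon$ scales like $n^{-3/d}$ and the ball-volume constant $V_d$ enters — while still concluding that $\asp(X)$, and therefore $\log_2 \asp(X)$, stays polynomial, respectively logarithmic, in $n$ for fixed $d$, which is exactly what Lemma~\ref{lem:asp_ratio} consumes.
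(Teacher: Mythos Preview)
Your proposal is correct and follows the same high-level plan as the paper: bound the aspect ratio of (a component of) $X$ polynomially in $n$ and then invoke Lemma~\ref{lem:asp_ratio}. The one substantive difference is how you obtain the closest-pair lower bound. The paper appeals to a VC uniform-convergence statement (Lemma~\ref{lem_VC}) to argue that no radius-$r$ ball contains two sample points, whereas you use the more elementary route of bounding $\Pr[d(x,y)\le\epsilon]$ for a single pair by integrating the density and then taking a union bound over the ${n\choose 2}$ pairs. Your argument is cleaner and avoids the VC machinery entirely; the price is that the radius you can afford is $\Theta(n^{-3/d})$ rather than the paper's claimed $\Omega(1/n)$, but since only $\log\asp(X)$ is consumed downstream this makes no difference to the $O(k^2\log n)$ conclusion. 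The diameter bound is also handled slightly differently (you apply Markov to the sum $\sum_i d(x_i,\mu)^2$, the paper applies it pointwise with a union bound), but both yield an $O(n)$ diameter with failure probability $1/n$.
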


Theorem \ref{thm:asp_ratio_bound} is proved through the following lemmas.

\begin{lem} \label{lem_VC}
(VC theory) Let $\X$ denote any probability distribution over $\R^d$. For any ball $B \subset \R^d$, let $\E[B]$ denote $\Pr_{x \sim \X}[x \in B]$. For any $0 < \delta < 1$, let $\alpha_n = \sqrt{\frac{4(d+2)\ln(16n/\delta)}{n}}$. Then with probability $1-\delta$ over $S \sim \X^n$, for \textbf{all} balls $B \subset \R^d$, $$\frac{|S \cap B|}{n} \leq \E[B] + \alpha_n^2 + \alpha_n\sqrt{\E[B]}.$$
\end{lem}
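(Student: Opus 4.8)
The plan is to obtain this as a special case of a standard uniform-convergence statement, by combining two classical ingredients: a bound on the VC dimension of the family of Euclidean balls, and the \emph{relative} (multiplicative) Vapnik--Chervonenkis deviation inequality. Throughout, write $P(B) = \E[B]$ for the true mass of a ball $B$ and $\hat P_n(B) = |S \cap B|/n$ for its empirical mass, and let $\mathcal{B}$ denote the family of all balls in $\R^d$.

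First I would bound $\mathrm{VC}(\mathcal{B})$. Under the standard lift $\varphi(x) = (x, \|x\|_2^2) \in \R^{d+1}$, every ball $\{x : \|x-c\|_2^2 \le r^2\}$ is the $\varphi$-preimage of a halfspace in $\R^{d+1}$; since halfspaces in $\R^{d+1}$ have VC dimension $d+2$ and $\mathcal{B}$ corresponds to a subfamily of them, $\mathrm{VC}(\mathcal{B}) \le d+1$. By Sauer--Shelah, the growth function therefore satisfies $\Pi_{\mathcal{B}}(m) \le \bigl(\tfrac{em}{d+1}\bigr)^{d+1} \le (em)^{d+1}$ whenever $m \ge d+1$, and the remaining regimes (in particular any regime in which $n$ is small enough that $\alpha_n \ge 1$) make the claimed inequality vacuous, since its right-hand side is then at least $\alpha_n^2 \ge 1 \ge \hat P_n(B)$.

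Next I would invoke the relative deviation bound in the direction that controls \emph{over}-counting: for every $\epsilon > 0$,
$$\Pr_{S \sim \X^n}\!\Bigl[\,\exists B \in \mathcal{B}:\ \hat P_n(B) - P(B) > \epsilon\sqrt{\hat P_n(B)}\,\Bigr] \ \le\ 4\,\Pi_{\mathcal{B}}(2n)\,e^{-\epsilon^2 n/4}$$
(Vapnik--Chervonenkis; see, e.g., Anthony and Bartlett's textbook or the survey of Boucheron, Bousquet and Lugosi, where this and the symmetric $P(B) - \hat P_n(B)$ version both appear). Setting $\epsilon = \epsilon_n$ with $4\Pi_{\mathcal{B}}(2n)e^{-\epsilon_n^2 n/4} = \delta$, so that $\epsilon_n^2 = \tfrac{4}{n}\bigl(\ln 4 + \ln\Pi_{\mathcal{B}}(2n) + \ln(1/\delta)\bigr)$, and substituting $\ln\Pi_{\mathcal{B}}(2n) \le (d+1)\ln(2en)$, an elementary estimate (using $2e < 16$, $\delta < 1$, and $d \ge 1$) collapses $\ln 4 + (d+1)\ln(2en) + \ln(1/\delta)$ into at most $(d+2)\ln(16n/\delta)$, hence $\epsilon_n \le \alpha_n$. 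On the resulting probability-$(1-\delta)$ event, every ball $B$ satisfies $\hat P_n(B) - P(B) \le \alpha_n\sqrt{\hat P_n(B)}$; reading this as a quadratic inequality in $u = \sqrt{\hat P_n(B)}$ gives $u \le \tfrac{1}{2}\bigl(\alpha_n + \sqrt{\alpha_n^2 + 4P(B)}\bigr) \le \alpha_n + \sqrt{P(B)}$, and resubstituting yields $\hat P_n(B) \le P(B) + \alpha_n u \le P(B) + \alpha_n^2 + \alpha_n\sqrt{P(B)}$, which is exactly the asserted bound $\tfrac{|S\cap B|}{n} \le \E[B] + \alpha_n^2 + \alpha_n\sqrt{\E[B]}$.

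I do not expect a substantive obstacle here: the whole argument is classical, and the only genuine care is in (a) citing the relative deviation inequality in the direction with the \emph{empirical} quantity under the square root — some references state only the $P - \hat P$ version, in which case one quotes the (equally standard) symmetric variant — and (b) tracking constants so that the $\ln 4$, $\ln\Pi_{\mathcal{B}}(2n)$, and $\ln(1/\delta)$ terms fuse into the single factor $(d+2)\ln(16n/\delta)$. A fully self-contained alternative would reprove the inequality from scratch via symmetrization with a ghost sample together with a random-swap (or Rademacher) argument, but since the lemma is used only as a black box in the sequel, quoting the standard result is the cleaner route.
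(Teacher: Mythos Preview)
Your proof is correct, but note that the paper does not actually prove this lemma at all: it simply writes ``For a proof of Lemma~\ref{lem_VC}, see Lemma~1 of \cite{dasgupta2007active}'' and moves on. So there is no ``paper's own proof'' to compare against; what you have done is reconstruct the standard argument that underlies the citation, and your reconstruction---VC dimension of balls via the lift to halfspaces, the relative (multiplicative) VC deviation inequality, and the quadratic-in-$\sqrt{\hat P_n(B)}$ unwinding---is exactly the classical route one would expect.

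One small slip worth flagging: from ``balls correspond to a subfamily of halfspaces in $\R^{d+1}$, which have VC dimension $d+2$'' you conclude $\mathrm{VC}(\mathcal{B}) \le d+1$. The subfamily argument alone only gives $\le d+2$; the sharper $d+1$ is true but needs a separate (still elementary) shattering argument. This does not matter for the lemma, since the stated $\alpha_n$ already carries the factor $d+2$, and your constant-tracking in step~(b) goes through equally well with either bound. Everything else---the direction of the relative deviation inequality with $\sqrt{\hat P_n(B)}$ under the root, and the algebra $u^2 - \alpha_n u \le P(B) \Rightarrow \hat P_n(B) \le P(B) + \alpha_n^2 + \alpha_n\sqrt{P(B)}$---is clean.
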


For a proof of Lemma \ref{lem_VC}, see Lemma 1 of \cite{dasgupta2007active}. 

\begin{lem}\label{lem:min_dist}
Let $\D$ be as described in Theorem \ref{thm:asp_ratio_bound}. Then for $X = \{x_1, x_2, \dots, x_n\} \sim \D^n$, with probability at least $1 - \frac{1}{n}$, $\min_{i \neq j} d(x_i, x_j) = \Omega(\frac{1}{n})$. 
\end{lem}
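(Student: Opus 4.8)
}
The plan is a plain union bound over pairs, using the assumed density bound on $\D$ to control the chance that two independent samples land within a small distance $r$ of each other. The one structural fact I need is that a ball of radius $r$ in $\R^d$ has volume $\omega_d r^d$ for a constant $\omega_d$ depending only on $d$; combined with the hypothesis that $\D$ has density at most $\rho$, this gives, for \emph{every} center $p\in\R^d$,
\[
\Pr_{x\sim\D}\big[\, d(x,p)< r \,\big]\;\le\;\rho\,\omega_d\, r^d .
\]
Note this estimate is uniform in $p$, which is exactly what lets me condition on one sample and bound the position of another.

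Concretely, first I would fix $r>0$ (to be chosen) and a pair $i\ne j$; conditioning on $x_i$ and applying the display with $p=x_i$ shows $\Pr[d(x_i,x_j)<r]\le \rho\,\omega_d\,r^d$ irrespective of the realized value of $x_i$. A union bound over the $\binom n2\le n^2/2$ pairs then yields $\Pr[\min_{i\ne j}d(x_i,x_j)<r]\le \tfrac12 n^2\rho\,\omega_d\,r^d$. Choosing $r$ so that this is at most $1/n$ — i.e.\ $r=\big(2/(\rho\,\omega_d\,n^3)\big)^{1/d}$ — finishes the argument: with probability $\ge 1-1/n$ every pairwise distance is at least this $r$. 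Since $d$ and $\rho$ are constants, $r=\Omega(n^{-3/d})=\Omega(1/\poly(n))$ (and $r=\Omega(1/n)$ as stated once $d\ge 3$). Either way $\log\!\big(1/\min_{i\ne j}d(x_i,x_j)\big)=O(\log n)$ with probability $1-1/n$, which is all that Lemma~\ref{lem:asp_ratio} and Theorem~\ref{thm:asp_ratio_bound} use — they only need $\log \asp(X^i)=O(\log n)$.

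I do not expect a real obstacle here; the argument is elementary. The two points needing a little care are (i) that the ball-mass bound must be invoked uniformly over the \emph{random} center $x_i$, which is legitimate because it is a statement about the $\D$-mass of an arbitrary radius-$r$ ball rather than a fixed one; and (ii) the exponent — a pure density bound only buys $\Omega(n^{-3/d})$ on the minimum interpoint distance, not $\Omega(n^{-1})$, so the ``$1/n$'' in the statement should be read with $d$ treated as a constant (the discrepancy is absorbed by the $O(\cdot)$ downstream). Incidentally, the neighboring VC estimate (Lemma~\ref{lem_VC}) is not the right instrument for this direction, since its additive $\alpha_n^2=\Theta((\log n)/n)$ term already exceeds the target $1/n$; the crude union bound is both simpler and tight enough.
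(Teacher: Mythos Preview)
Your argument is correct and takes a genuinely different route from the paper's. The paper invokes the VC uniform-deviation bound of Lemma~\ref{lem_VC} to conclude that every radius-$r$ ball contains at most one sample, whereas you bypass uniformity entirely and use a direct union bound over the $\binom n2$ pairs, conditioning on one point and using the density cap to bound the other. Your approach is strictly simpler, and your closing remark is well taken: the paper's proof asserts ``$\alpha_n^2\ll 1/n$'' in order to force $|X\cap B|<2$, but in fact $\alpha_n^2=\Theta((\log n)/n)$, so the additive VC slack already swamps the $2/n$ threshold and the paper's argument does not close as written. Your pairwise union bound avoids this issue altogether. You are also right that a bare density hypothesis yields only $\min_{i\ne j}d(x_i,x_j)=\Omega(n^{-3/d})$ rather than $\Omega(n^{-1})$ in dimensions $d\le 2$; as you note, this is immaterial for Theorem~\ref{thm:asp_ratio_bound}, which only needs $\log\asp(X^i)=O(\log n)$, and that conclusion survives with the weaker exponent.
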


\begin{proof}
Let $\alpha_n = \sqrt{\frac{4(d+2)\ln(16n^2)}{n}}$ (we are setting $\delta = 1/n$ in the notation from Lemma \ref{lem_VC}). Let $r > 0$ be arbitrary. Then by Lemma \ref{lem_VC}, for all balls of radius $r$, we have that with probability $1-\frac{1}{n}$ over $X \sim \D^n$, $\frac{|X \cap B|}{n} \leq \Pr_{x \sim \D}[x \in B] + \alpha_n^2 + \alpha_n\sqrt{\Pr_{x \sim \D}[x \in B]}.$ We can bound $\Pr_{x \sim \D}[x \in B]$ by integrating the probability density of $\D$ over $B$. In particular, if $C_d$ denotes the volume of the unit ball in $\R^d$, we have that $\Pr_{x \sim \D}[x \in B] \leq \int_B \rho d\mu = \rho C_d r^d$ where $\rho$ is the upper bound on the probability density of $\D$. Substituting this, we see that with probability at least $1- \frac{1}{n}$, for all balls of radius $r$, $$\frac{|X \cap B|}{n} \leq \rho C_dr^d + \alpha_n^2 + \alpha_n\sqrt{\rho  C_dr^d}.$$ By setting $r = \Omega(\frac{1}{n})$, and observing that $\alpha_n^2 << 1/n$ as $n \to \infty$, we see that with probability at least $1 - \frac{1}{n}$, $|X \cap B| < 2$ for all balls of radius $r$. By the triangle inequality, this implies that $\min_{i \neq j} d(x_i, x_j) \geq 2r  = \Omega(\frac{1}{n})$, as desired. 
\end{proof}

\begin{lem}\label{lem:max_dist}
Let $\D$ be as described in Theorem \ref{thm:asp_ratio_bound}, and $X = \{x_1, x_2, \dots, x_n\} \sim \D^n$. Let $X^i$ be set of points in $X$ sampled from $\D^i$ (as described earlier). Then with probability at least $1 - \frac{1}{n}$, for all $1 \leq i \leq k$, $\max_{x_a\neq x_b \in X^i} d(x_a, x_b) \leq O(kn^2)$. 
\end{lem}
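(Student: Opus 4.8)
The plan is to control each $X^i$ separately and then union-bound over the $k$ pieces. Fix a single mixture component $\D^i$. By assumption (1), $\E_{x\sim\D^i}[d(x,\mu)^2]\le D$ where $\mu=\E_{x'\sim\D}[x']$ is the mean of the \emph{whole} mixture; in particular, every sample from $\D^i$ is, in squared-distance expectation, at distance $O(\sqrt D)$ from the fixed point $\mu$. The first step is to apply Markov's inequality: for a single draw $x\sim\D^i$, $\Pr[d(x,\mu)^2 > t]\le D/t$. Setting $t = \Theta(D\,k\,n^2)$ makes this probability at most $\frac{1}{kn^2}$.

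Next I would union-bound over all $n$ points and all $k$ components (or simply over the $n$ points of $X$, since each lies in exactly one $X^i$): with probability at least $1-\frac{1}{n}$, every point $x_a\in X$ satisfies $d(x_a,\mu)\le O(\sqrt{D k n^2}) = O(\sqrt D\,\sqrt k\,n)$. Conditioned on this event, the triangle inequality gives, for any two points $x_a,x_b\in X^i$,
\[
d(x_a,x_b)\le d(x_a,\mu)+d(\mu,x_b) \le O(\sqrt D\,\sqrt k\,n).
\]
Since $D$ is a fixed constant, this is $O(\sqrt k\, n)$, which is certainly $O(kn^2)$ as claimed (the stated bound $O(kn^2)$ is loose but suffices; the implied constant absorbs $\sqrt D$). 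Here I am treating $d$ as the Euclidean metric on $\R^d$ so that the triangle inequality through the fixed center $\mu$ is valid.

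The only mild subtlety — and the step I would be most careful about — is the bookkeeping of which mean appears in the hypothesis: assumption (1) bounds the expected squared distance from $x\sim\D^i$ to the \emph{global} mean $\E_{x'\sim\D}[x']$, not to the component mean of $\D^i$, so there is no need to separately argue that component means are close together; the Markov step applies directly with the single fixed anchor point $\mu$. One should also note that $|X^i|$ is itself random, but this causes no trouble: the high-probability bound $d(x_a,\mu)\le O(\sqrt{Dkn^2})$ holds simultaneously for all $n$ sample points regardless of their component labels, so it holds in particular for whichever of them land in $X^i$. Combining, with probability at least $1-\frac1n$ the diameter of every $X^i$ is $O(kn^2)$, which is the assertion of the lemma.
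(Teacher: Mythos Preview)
Your proposal is correct and follows essentially the same route as the paper: Markov's inequality on $d(x,\mu)^2$ with threshold $\Theta(Dkn^2)$, a union bound over the $n$ samples and $k$ components, and the triangle inequality to pass from a radius bound to a diameter bound. Your handling of the anchor point is in fact cleaner than the paper's, which somewhat carelessly alternates between the component mean $\mu^i$ and the global mean $\mu$ even though assumption~(1) is stated only for the latter; your observation that one can work with the single global mean throughout resolves this neatly.
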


\begin{proof}
Fix any $1 \leq i \leq k$. By the triangle inequality, $$\max_{x_a\neq x_b \in X^i} d(x_a, x_b) \leq 2\max_{x_a \in X^i}d(x_a, \mu^i),$$ where $\mu^i = \E_{x \sim \D^i}[x]$. Also, $|X^i| \leq |X| = n$. Therefore, by markov's inequality, we see that $\Pr_{x \sim \D^i}[d(x, \mu)^2 > Dkn^2] \leq \frac{1}{kn^2}$. Therefore, by a union bound over all $x^a \in X^i$, with probability at least $1 - \frac{1}{nk}$, $d(x_a, \mu^i)^2 \leq Dkn^2$ for all $x^a \in X^i$. Taking a union bound over all $1 \leq i \leq k$, gives the desired result.
\end{proof}

We are now in the configuration to prove Theorem \ref{thm:asp_ratio_bound}.

\begin{proof}
(Theorem \ref{thm:asp_ratio_bound}) By Lemmas \ref{lem:min_dist} and \ref{lem:max_dist}, we have that for all $1 \leq i \leq k$, with probability at least $1 - \frac{2}{n}$, $\asp(X^i) \leq O(kn^3)$. By Lemma \ref{lem:asp_ratio}, we have that for some $i$, $\seq_k(X^i) \leq k^2\log(\asp(X))$ which implies that $\seq_k(X) \leq O(k^2\log n)$ as desired.
\end{proof}

As an immediate consequence of Theorem \ref{thm:asp_ratio_bound}, we see that for mixtures of $k$ Gaussians, as well as for mixtures of $k$ uniform distributions, $\seq_k(X)$ is $O(k^2\log n)$. 

\section{Conclusion and open questions}
We design a new $k$-means clustering algorithm in the online no-substitution setting, where importantly, points are received in arbitrary order. We introduce a new complexity measure, $\seq_k(X)$, to bound the number of centers the algorithm returns. We show that the complexity of data generated from many mixture distributions is bounded by $\seq_k(X)=O(k^2\log n)$. We prove that the algorithm takes only $O(\seq_k(X)\log(n)k\log(k))$ centers, and the algorithm is a $\poly(k)$-approximation. We complement this result by proving a lower bound of $\Omega\left(\frac{\seq_k(X)}{k\log(\alpha n)}\right)$ on the number of centers taken by any $\alpha$-approximation algorithm.

An obvious direction for future work is to improve the algorithm's parameters or prove they are tight. We proved that our algorithm is $\poly(k)$-approximation. Can it be improved to $\Theta(1)$-approximation?
For constant $k$ we bounded the number of centers, taken by our algorithm, by $O(\seq_k(X)\log(n))$ and showed a lower bound of $\Omega(\nicefrac{\seq_k(X)}{\log (n)})$, for any $\Theta(1)$-approximation algorithm. There is a gap of $\polylog(n)$ between our lower and upper bounds on the number of centers. An interesting future work would be to close this gap.


The new algorithm and complexity measure are suggested to handle the case the order of the data is arbitrary, but can they also help if the order is random? It is known, \cite{moshkovitz2019unexpected}, that if the order is random, then $\Omega(\log n)$ centers are necessary. This lower bound was proved using a high $\mes$ complexity dataset, which is equal to $n$. Suppose the data's complexity is $\seq_k(X) \ll n$. Can the number of centers taken by a $\poly(k)$-approximation algorithm be dependent solely on $\seq_k(X)$ and not on $\log(n)$?

\section*{Acknowledgments} 

We thank Kamalika Chaudhuri for posing the central question of this paper: whether it is possible to obtain better results for online $k$-means clustering with adversarial order under assumptions on the underlying dataset (i.e. drawn from a mixture of Gaussians). We also thank Sanjoy Dasgupta for several helpful discussions about our results and proofs. 

Finally we thank NSF under CNS 1804829  for research support.

\bibliography{references}

\end{document}